\newtheorem{assumption}{Assumption}
\newtheorem{lemma}{Lemma}
\newtheorem{definition}{Definition}
\newtheorem{observation}{Observation}
\newtheorem{proposition}{Proposition}
\newcommand{\owner}{\textsc{Owner}}
\newcommand{\miner}{\textsc{Miner}}
\newcommand{\predecessor}{\textsc{Pred}}
\newcommand{\sha}{\text{HASH}}
\newcommand{\reals}{\mathbb{R}}
\tikzset{main node/.style={circle,fill=blue!20,draw,minimum size=1cm,inner sep=0pt},}
\newcommand{\jnote}[1]{{\color{magenta} {(\sf Jonah's Note:} {\sl{#1}}     {\sf )}}}
\newcommand{\mattnote}[1]{{\color{blue}\sl{#1}}}
\newcommand{\anote}[1]{{\color{red} {(\sf Alex's Note:} {\sl{#1}}     {\sf )}}}
\begin{document}

\title{Formal Barriers to {Longest-Chain} Proof-of-Stake Protocols}
\date{}
\author{Jonah Brown-Cohen\thanks{Computer Science Department, UC Berkeley. Email:\tt{jonahbc@eecs.berkeley.edu}} \and Arvind Narayanan\thanks{Computer Science Department, Princeton University. Email:\tt{arvindn@cs.princeton.edu}} \and Christos-Alexandros Psomas\thanks{Computer Science Department, Carnegie Mellon University. Email:\tt{cpsomas@cs.cmu.edu}} \and S. Matthew Weinberg\thanks{Computer Science Department, Princeton University. Email:\tt{smweinberg@princeton.edu}. Supported by NSF CCF-1717899. Work done in part while the author was a Research Fellow at the Simons Institute for the Theory of Computing.}}

\maketitle

\begin{abstract}
The security of most existing cryptocurrencies is based on a concept called \emph{Proof-of-Work}, in which users must solve a computationally hard cryptopuzzle to authorize transactions (``one unit of computation, one vote''). This leads to enormous expenditure on hardware and electricity in order to collect the rewards associated with transaction authorization. \emph{Proof-of-Stake} is an alternative concept that instead selects users to authorize transactions proportional to their wealth (``one coin, one vote''). Some aspects of the two paradigms are the same. For instance, obtaining voting power in Proof-of-Stake has a monetary cost just as in Proof-of-Work: a coin cannot be freely duplicated any more easily than a unit of computation. However some aspects are fundamentally different. In particular, \emph{exactly} because Proof-of-Stake is wasteless, there is no inherent resource cost to deviating (commonly referred to as the ``Nothing-at-Stake'' problem).

In contrast to prior work, we focus on incentive-driven deviations (\emph{any} participant will deviate if doing so yields higher revenue) instead of adversarial corruption (an adversary may take over a significant fraction of the network, but the remaining players follow the protocol). The main results of this paper are several formal barriers to designing incentive-compatible proof-of-stake cryptocurrencies (that don't apply to proof-of-work). 
\end{abstract}

\addtocounter{page}{-1}
\newpage
\section{Introduction}\label{sec:intro}
Since Nakamoto's white paper in 2008 \cite{nakamoto2008bitcoin}, Bitcoin and other cryptocurrencies have become ubiquitous, with hundreds of billions of USD worth of various cryptocurrencies currently in circulation.\footnote{\url{https://coinmarketcap.com} } While existing technology is already remarkable, it remains an active research area on many fronts. This paper focuses on one aspect of this agenda: Proof-of-Work versus Proof-of Stake. We provide a brief overview of the salient aspects of cryptocurrencies below before highlighting our contributions.

\subsection{What is a Cryptocurrency?}
At their core, all cryptocurrencies are simply decentralized ledgers. A network of participants wish to agree upon a sequence of events and the order in which they occurred. These events could be monetary transactions in the case of Bitcoin, script commands in the case of Ethereum, or many others. Two salient features of cryptocurrencies that define their purpose are the following:
\begin{itemize}[leftmargin=*]
\item It is crucial that the entire network reach \emph{consensus} on the occurrence of events. For instance, a currency isn't very useful if users can't agree on who owns which coins. A shared virtual machine isn't very useful if users can't agree on its state. 
\item The network is \emph{permissionless and pseudonymous}. That is, no identification outside the network is necessary to join and participate.\footnote{``Permissioned blockchains'' are becoming popular in finance, but these are fundamentally different than permissionless cryptocurrencies.}
\end{itemize}

Basic cryptography ensures that users can't forge transactions from other accounts, or undetectably propose otherwise invalid commands, so the main challenge is ensuring that users all agree on a state of the ledger, and also that no adversary can unduly influence the ledger's state, or otherwise gain by subverting the protocol. The typical attack to keep in mind is called a ``double-spend:'' imagine that you wish to purchase a car with Bitcoin. You digitally sign a transaction paying the owner a large sum of Bitcoin and broadcast it. The network agrees on a state of history where this transaction occurs, and then you get the car keys. As soon as this happens, you announce a digitally signed transaction that pays the same Bitcoin to an alternate account that you control, and do your best to subvert the protocol to agree on a ledger that includes this new transaction instead, leaving you with your Bitcoin intact as well as your new car.\footnote{Note that no ledger can possibly contain both transactions, as they spend the same coins. A currency would have no use if it were possible to spend the same coins twice.} The point here is that the ledger \emph{is} the currency: if the network believes that a transaction didn't take place, then by definition that transaction didn't take place.

A strawman proposal to cope might sound like this: every ten minutes, a uniform random participant from the network is selected. They may output a list of any number of consistent, valid transactions they like, and broadcast this to the entire network, along with a hash pointer to a previous selectee's output. Each output is assigned some numerical score,\footnote{For instance, Bitcoin's ``longest-chain rule'' (essentially) assigns a score to output $B$ equal to $B$'s distance from the root (when following hash pointers). See Definition~\ref{def:longestchain} in Section~\ref{sec:prelim}.} and users are asked to ``believe'' the history associated with the highest-scoring output they've seen so far. The key problem with this proposal is that selecting a uniformly random participant in a permissionless environment is absurd: users can freely create many ``Sybils,'' making a uniformly random participant simply the participant who successfully created the most IDs. One of the key ingredients in successful cryptocurrencies is a random selection process that is \emph{Sybil-proof}. 

\subsubsection*{Proof-of-Work Versus Proof-of-Stake}
The two most popular approaches to the above challenge are termed ``Proof-of-Work'' and ``Proof-of-Stake.'' Proof-of-Work is employed by Bitcoin and Ethereum, and aims to select a user randomly, but \emph{proportional to their computational power}. The idea is that while an attacker can certainly go out and purchase more computational power, exaggerating your computational power comes at a cost (unlike creating additional IDs). Proof-of-Work is typically implemented by requiring that all messages are concatenated with a nonce such that $\sha(\text{message, nonce}) <<2^{256}$, for some ideal hash function $\sha$ with 256-bit output. It is widely believed that the best way to find such a nonce is to randomly guess (referred to as \emph{mining}), and so every unit of computational power gives you a small additional probability of being able to send a valid message. As a result, some estimates predict over 2 billion USD spent annually (between electricity, cooling, etc.) \emph{just computing hashes} (more than a fourth of the NSF budget for 2017)~\cite{narayanan2016bitcoin}.

Proof-of-Stake isn't as widespread, but is an ongoing research focus of Ethereum~\cite{ButerinG17}, and is still responsible for billions of USD through cryptocurrencies such as NEM~\cite{NEM}, Cardano,\footnote{https://www.cardano.org/en/home/} BlackCoin~\cite{vasin2014blackcoin}, PeerCoin~\cite{Peercoin}, Nxt~\cite{Nxt}, and Tezos~\cite{Tezos}. The goal is to select a user randomly, and \emph{proportional to their wealth} (in the currency itself). The idea again is that while an attacker can certainly go out and purchase more stake in the currency, it comes at a cost. In comparison to Proof-of-Work, Proof-of-Stake wastes no electricity. 

\subsection{Security Concerns Specific to Proof-of-Stake}
From a security perspective, it's actually convenient that Proof-of-Work wastes resources: this guarantees that certain deviations (discussed in the next paragraph) from the intended protocol also cost additional resources, and are naturally disincentivized (this, of course, does not mean that deviations are never profitable, see e.g.~\cite{eyal2014majority,eyal2015miner, sapirshtein2016optimal, carlsten2016instability}). Proof-of-Stake, on the other hand, has the property that such deviations consume no resources and this is \emph{exactly because Proof-of-Stake is wasteless}, so it falls on the protocol to disincentivize such behavior through clever reward schemes.

This phenomenon is commonly referred to as ``Nothing-at-Stake,'' and refers to the fact that it consumes no additional resources for participants to, for instance, copy an outdated history of the currency and participate simultaneously with the ``real'' one, or even to copy every outdated history and participate in all of them simultaneously. Numerous recent works, both commercial and academic, aim to address this challenge with clever reward schemes. Commercial protocols indeed propose interesting approaches, and don't seem to have suffered major security setbacks to date. However, these ideas appear informally in whitepapers, often without formal definitions or rigorous reasoning, and are far from being fully explored. Academic proposals, on the other hand, provide rigorous, provable security guarantees, usually in ``network intrusion'' models (where a fraction of the network is malicious, but the remainder is honest).

In contrast to both existing streams of literature, we focus on rigorous guarantees in a ``strategic'' model (100\% of users act to maximize their own reward). \textbf{The main result of this paper is a formal barrier to incentive-compatible proof-of-stake protocols}. Specifically, we introduce a model for Proof-of-Stake protocols that captures the vast majority of commercial and academic proposals. Next, we show that every protocol fitting into this model must satisfy one of two complementary properties. Finally, we show how an attacker can exploit each one of these properties in order to benefit by strategically deviating from the prescribed protocol.

At a conceptual level, the barriers stem from the following: all cryptocurrencies require some source of (pseudo)randomness. In Proof-of-Work, this pseudorandomness is in some sense external to the cryptocurrency: the first miner to successfully find a good nonce produces the next block, and this miner is selected completely independently of the current state of the cryptocurrency. 

In Proof-of-Stake, it is highly desirable that the pseudorandomness comes from within the cryptocurrency itself, versus an external source (due to network security concerns discussed in Section~\ref{sec:prelim}). One might initially suspect that with sufficiently many hashes or digital signatures of past blocks, this can indeed serve as a good source of pseudorandomness for future blocks. However, we formalize surprising barriers showing a fundamental difference between external pseudorandomness and pseudorandomness coming from the cryptocurrency itself.

\subsection{Our Model: Formal Guarantees in Ideal Network Conditions}
Seminal work of Fischer, Lynch, and Paterson~\cite{FischerLP85} proves that if one desires formal consensus guarantees, even in the presence of a single adversarial user, one must make \emph{some} network assumptions. Existing academic works therefore aim to make the minimal assumptions necessary.


Consider instead an ideal network where every node has perfectly synchronized clocks, and every message is received with zero latency by every other node. It is not hard to design secure Proof-of-Stake cryptocurrencies in this model (at least, ones that are secure against known attacks), but these protocols will look completely unrealistic. The key power that this assumption buys is the ability to ignore messages that arrive even slightly late, as they cannot have been sent by an honest participant.
The problem with \emph{targeting} a solution that is only secure in this ideal model is that you wind up with an extremely fragile protocol which can't handle even minimal latency. 

Still, despite how unrealistically strong this model is, the authors aren't aware of existing protocols without incentive issues \emph{even in this ideal network model} (put another way, it is surprisingly daunting to develop a protocol that is secure in the typical sense for any non-trivial network model, yet secure in the strategic sense in the ideal network model). Rather than posing a long list of similar looking attacks, we formalize intuitively undesirable properties of a Proof-of-Stake protocol that capture the issues in many existing proposals. For each property, we show that \emph{any Proof-of-Stake protocol with this property} is vulnerable to a certain kind of attack. We elaborate further on these properties and attacks in the technical sections.

\subsection{Comparison to Related Work}\label{sec:related}
The ``incentive-driven'' threat to cryptocurrencies appears well-understood even as far back as Nakamoto's whitepaper, yet has received considerably less formal attention (some notable exceptions unrelated to Proof-of-Stake include~\cite{babaioff2012bitcoin, eyal2015miner, eyal2014majority, kiayias2016blockchain, carlsten2016instability, sapirshtein2016optimal}). While such ``attacks'' aren't a direct threat to consensus, they pose a severe indirect threat:~\cite{eyal2014majority} observes that such attackers gobbling up profits from honest participants could drive them out of the market, enabling a threat to consensus.

The most obviously related works to the present paper are academic Proof-of-Stake proposals~\cite{bentov2016snow, kiayias2017ouroboros, gilad2017algorand}. These works focus primarily on the network intrusion threat model, as it is obviously important that proposals be secure in the classical sense before concerning oneself with incentives. 

Some works go further and provide incentive guarantees, proving that miners who strategically deviate from the prescribed protocol can only gain a small $\varepsilon$ fraction of the total rewards~\cite{bentov2016snow, kiayias2017ouroboros}. Still,~\cite{bentov2016snow} notes that it is preferable for known strategic deviations to be \emph{strictly} disincentivized (and prove that their scheme achieves this for Nothing-at-Stake), and~\cite{kiayias2017ouroboros} observes that not all known deviations are captured by such claims (and prove that their scheme successfully disincentivizes double-spending). In the context of these works, we propose that the deviations formalized in this paper receive similar treatment to currently-known attacks in future analyses. We provide much more detail regarding how our results interact with existing proposals in Appendix~\ref{sec:examples}.
\subsection{Summary of Contributions and Roadmap}
Our model aims to isolate the incentive-driven threat, and we formalize two complementary properties such that every longest-chain Proof-of-Stake protocol must satisfy one of them. We further demonstrate incentive-driven attacks against protocols satisfying either of these properties. 

Section~\ref{sec:prelim} contains necessary definitions related to cryptocurrencies and Proof-of-Stake (where even our formal definition of Proof-of-Stake may be of interest). Section~\ref{sec:properties} defines generic properties shared by a wide class of protocols. Section~\ref{SEC:ATTACKS} poses generic attacks against any protocol with the properties defined in Section~\ref{sec:properties}. Section~\ref{sec:discussion} provides more detailed context for our paper with respect to prior work and proposes future directions. 
\section{Preliminaries}\label{sec:prelim}

\begin{definition}[Block]
A cryptocurrency stores its decentralized ledger in a set of objects called \emph{blocks}. Every block $B$ contains a pointer to its predecessor $\predecessor(B)$, a previous block (via its \sha). Every block is created by a single \emph{miner}, denoted by $\miner(B)$, and has a timestamp $t_B$ that indicates its claimed creation time (that is, the creator of the block can insert any value they like for $t_B$, independent of the actual time at which it was created). Blocks also contain some other information that has semantic meaning (such as transactions in the case of Bitcoin, scripts in the case of Ethereum, etc.). 
\end{definition}

Each block $B$ describes a potential history of events, as defined by the semantic contents of $B$ and its predecessors. For example, a Bitcoin block describes a series of monetary transactions.

\begin{definition}[Coin]
The basic monetary unit of any cryptocurrency is called a \emph{coin}, referenced by a unique ID. Every coin has an owner, referenced by a unique public key. Certain transactions have an associated semantic meaning that changes the owner of a coin. So for a given block $B$, and coin $c$, one can define $\owner_B(c)$ to be the owner of a coin $c$ as defined by the semantic meaning of transactions included in the history defined by block $B$. Here, we are referring to the smallest discrete monetary unit (so one Satoshi in the case of Bitcoin, rather than one bitcoin). 
\end{definition}

\vspace{-2mm}

The key problem that all cryptocurrencies need to resolve is the \emph{consensus problem}: how can we get all nodes in the network to eventually agree on the occurrence of events? All such protocols necessarily have a notion of \emph{validity}. That is, users cannot just generate arbitrary messages and send them at arbitrary times and have the rest of the network recognize these as potential blocks to be added to history (for instance, the block could contain invalid transactions, or it might not be that user's ``turn'' to send a message). We focus on protocols that satisfy the following assumptions:


\begin{assumption}
\label{assumption:protocol}
Let $T(B)$ be the graph whose nodes are blocks who share an ancestor with $B$, and whose edges are pointers to predecessors. We assume that:
\begin{enumerate}[label=(\alph*),leftmargin=*]
\item (Chain Dependence) Block $B$'s validity at time $t$ only depends on $B$, $t$ and $B$'s predecessors.
\item (Monotonicity) If a block $B$ is valid at time $t$ for a given graph $T(B)$, then it is valid for all graphs $T'$ that contain $T(B)$ as a subgraph and for all times $t' \geq t$.
\end{enumerate}
\end{assumption}


Before continuing, let's quickly motivate/discuss Assumption~\ref{assumption:protocol}. All protocols that the authors are aware of satisfy Assumption~\ref{assumption:protocol}, and this is for good reason: all protocols are vulnerable to what are called ``Eclipse'' attacks. An Eclipse attack occurs when an adversary prevents or blocks messages to honest participants of the protocol. If an adversary can temporarily partition the network into disjoint sets, of course the network can't reach consensus while partitioned. However, one might hope that the network can reach consensus once reunited. A major barrier to this possibility would be if a user once believed block $B$ to be valid (and built a deep history on top of it), and only learned once reunited that in fact $B$ was invalid all along. Without Chain Dependence and Monotonicity, this situation is entirely possible. With Chain Dependence and Monotonicity, Eclipse attacks are still a threat, but at least they cannot trick a user into believing a block is valid only to discover later that it was in fact invalid. Toy examples are provided in Appendix~\ref{app:assumption} to aid the interested reader in parsing Assumption~\ref{assumption:protocol}.

Now, we begin restricting attention to Proof-of-Stake protocols, in which the owner of a coin $c$ is eligible to mine a new block $B$ at some time $t$ according to the rules of the protocol. That is, every block $B$ further references a coin, $c_B$ that is used to witness $B$'s validity.
For any protocol satisfying Assumption~\ref{assumption:protocol}, the validity of a block can be determined by a function that takes only a block $B$ and the current time $t$ as input. This is because all of $B$'s predecessors can be accessed by following back the predecessor pointers to the root (and $t_B, c_B$ are both included in $B$). Thus, under these assumptions, a Proof-of-Stake Protocol can be defined as follows:

\begin{definition}[Proof-of-Stake Protocol]\label{def:PoS}
A \emph{Proof-of-Stake Protocol} $P$ is fully defined via two deterministic functions: a validating function $V_P$ and a mining function $M_P$. The validating function satisfies the following requirements:
\begin{itemize}
\item $V_P$ takes as input a block $B$ (which includes the claimed time of creation, $t_B$, and the claimed coin witness, $c_B$), and outputs an element of $\{ 0 , 1 \}$. \item $V_P$ must be efficiently computable by every participant in the protocol. 
\item A block $B$ is \emph{valid} at time $t$ if and only if $\predecessor(B)$ is valid and
\[
V_P(B) \cdot \mathbb{I}\{ \owner_{\predecessor(B)}(c_B) = \miner(B) \} \cdot \mathbb{I} \{ t_B \in [t_{\predecessor(B)}, t] \} = 1.
\]
\end{itemize}
The mining function $M_P$ satisfies the following requirements:
\begin{itemize}
\item $M_P$ takes as input a block $A$, a coin $c$ and timestamp $t$ and outputs a block. 
\item $M_P(A,c,t)$ is efficiently computable by $\owner_A(c)$.
\item For any coin $c$ and any time $t$, if there exists a block $B$ such that $B$ is valid at time $t$, where $\predecessor(B) = A$, $c_B = c$, and $t_B = t$, then $M_P(A,c,t) = B'$ where $B'$ is valid, $t_{B'} = t$, $c_{B'}= c$, and $\predecessor(B') = A$.
\item For any coin $c$ and any time $t$, if there is no block $B$ such that $B$ is valid at time $t$, where $\predecessor(B) = A$, $c_B = c$, and $t_B  = t$, then $M_P(A,c,t) = \bot$.
\end{itemize}
\end{definition}

Again before continuing, let's parse some aspects of this definition and what separates Proof-of-Stake from Proof-of-Work. The first two $V_P$ bullets are uncontroversial: a block is valid or it isn't, and every user in the network better be able to tell which blocks are valid. The third bullet might at first appear confusing, but recall that $B$ contains a reference to $t_B$, the claimed time of creation, and $c_B$, the coin witnessing validity. So $V_P(\cdot)$ can in fact depend on these. Beyond that, the first indicator is necessary to ensure that miners can't cheat by moving the same coin around different public keys in a potential block in order to make that block itself valid.\footnote{Appendix~\ref{app:freeze} contains a brief discussion on ``freezing'' coins for longer than just a single block, which only requires modifying $\predecessor(B)$ to $\predecessor^T(B)$ for some $T > 1$.} The second indicator is necessary to guarantee that a miner can't claim to have produced a block before s/he heard about its predecessor, nor claim to have produced a block in the future (but otherwise the \emph{current} timestamp is irrelevant for determining a block's validity, due to Assumption~\ref{assumption:protocol}). 

As for $M_P$, the second bullet captures two salient features. The first is that $M_P(A, c, t)$ is efficiently computable. This is what separates Proof-of-Stake from Proof-of-Work: if $M_P$ were not efficiently computable (e.g. because it involved inverting an ideal hash function), it would require non-trivial work to mine (Proof-of-Work). Because $M_P$ is efficiently computable, the owner of coin $c$ need only run $M_P(A, c, t)$ once during timestep $t$ for each coin they own, and has nothing to gain by doing additional work. The second salient feature is actually an omission: that $M_P(A, c, t)$ is not necessarily efficiently computable by miners other than $\owner_A(c)$ (e.g. because it perhaps requires producing a digital signature). The other three bullets are straight-forward. 

So far we have only discussed the validity of blocks. We also need to discuss where an honest user should mine (i.e. which block $A$ should be input to $M_P$). The dominating paradigm among existing proposals (including all commercial protcols referenced in Section~\ref{sec:intro}, but excluding Algorand~\cite{micali2016algorand,gilad2017algorand}, Casper~\cite{ButerinG17}, and GHOST~\cite{sompolinsky2015secure}) are variants of the \emph{longest-chain protocol}, where each block is given a monotone increasing score, and nodes are asked to ``believe'' the highest scoring block.\footnote{GHOST~\cite{sompolinsky2015secure} technically cannot be phrased in this language because the score of a block depends on the existence of descendants of $B$'s ancestors not referenced directly in $B$. If instead blocks are required to include pointers to these other blocks in order to ``get credit,'' then GHOST would also fit in this language. See also Appendix~\ref{app:GHOST} for a further discussion of different concerns regarding GHOST and Proof-of-Stake.}

\begin{definition}[Longest-Chain Variant]\label{def:longestchain}
A \emph{Longest-Chain Variant} has an associated scoring function $S(\cdot)$. $S$ takes as input a block $B$, outputs a score $S(B)\in \mathbb{R}$, and is monotone increasing: if $B'=\predecessor(B)$, then $S(B) > S(B')$. A Longest-Chain-Variant consensus protocol associated with $S$ asks users to mine on the valid block $A$ maximizing $S(A)$ (among blocks they are aware of).\footnote{Ties are allowed to be broken arbitrarily, but consistently (i.e. if $S(A) = S(A')$, users can arbitrarily decide to adopt $A$ or $A'$, but cannot switch between adopting $A$ and $A'$).} 
\end{definition}


Longest-Chain Variants are particularly common within cryptocurrencies because of their robustness to Eclipse attacks. Even if the network is partitioned for an extended period, and both disjoint subsets produce completely different histories, the entire network will quickly converge to the higher-scoring history as soon as the subsets reunite. Alternative protocols based on Byzantine Consensus~\cite{micali2016algorand,ButerinG17} lack this property, and instead achieve \emph{finality}. That is, once a user considers a block $B$ to be included in the ledger, they will never consider valid any ledger that not including $B$. Indeed, in protocols with finality, if the network is partitioned for an extended period, progress will either stall, or the network will never reach consensus even after being reunited. 

\section{Properties of Protocols}\label{sec:properties}
In this section we introduce two simple, intuitively desirable properties for Proof-of-Stake protocols. 

\subsection{(Un)-Predictability}
The first desirable property we define is unpredictability. Intuitively, it is good for protocols to be unpredictable in the sense that miners do not learn that they are eligible to mine a block until shortly before it is due to be mined. Many attacks, such as double-spending (discussed in Section~\ref{sec:intro}), or selfish-mining (\cite{eyal2014majority}, discussed in Section~\ref{SEC:ATTACKS}), can become \emph{much} more profitable if miners know in advance when they become eligible to mine. We begin with the definition of local predictability, which describes protocols where the owner of a coin knows in advance if she is eligible to mine a block with that coin. Let $\predecessor^D(B)$ be the $D$-th predecessor of a block $B$. That is, $\predecessor^1(B) = \predecessor(B)$ is the block that $B$ is mined on top of, $\predecessor^2(B) = \predecessor\left( \predecessor(B) \right)$, etc.

\begin{definition}[$D$-locally predictable]
A coin $c$ is $D$-locally predictable at block $A$ for timestamp $t$ if $\owner(c)$ can efficiently predict whether or not there will exist a block $B$ with $c_B = c$ such that $V_P(B)=1$, where $\predecessor^D(B) = A$ and $t_B  = t$.
\end{definition}

\begin{observation}
\label{obs:1locallypredictable}
For any Proof-of-Stake protocol, every coin $c$ is $1$-locally predictable at every block $A$ for every timestamp $t > t_A$.
\end{observation}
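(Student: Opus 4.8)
The plan is to use the mining function $M_P$ itself as the efficient predictor. Fix a valid block $A$, a coin $c$, and a timestamp $t > t_A$, and write $w := \owner_A(c)$ for the owner of $c$ according to the history defined by $A$; this is the party whose predictive ability I must establish. By definition, $M_P(A,c,t)$ is efficiently computable by $w$. Moreover, the last two bullets defining $M_P$ say exactly that $M_P(A,c,t) = \bot$ when no block $B$ with $\predecessor(B)=A$, $c_B=c$, $t_B=t$ is valid at time $t$, and that $M_P(A,c,t)$ returns such a valid block otherwise. Hence $w$ can efficiently decide whether a \emph{valid} block with these parameters exists, simply by computing $M_P(A,c,t)$ and comparing the result to $\bot$.

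Next I would reconcile this with the predictability definition, which instead asks about the existence of a block $B$ with $V_P(B)=1$ (and $\predecessor(B)=A$, $c_B=c$, $t_B=t$). The point is that for the candidate blocks at issue these two predicates coincide. Validity of such a $B$ at time $t$ requires, beyond $V_P(B)=1$: that $A=\predecessor(B)$ be valid, which holds by assumption; that $t_B \in [t_{\predecessor(B)}, t] = [t_A, t]$, which holds because $t_B = t$ and $t > t_A$; and that $\owner_{\predecessor(B)}(c_B) = \miner(B)$, i.e.\ $\miner(B) = \owner_A(c) = w$. Since the block whose existence $w$ is trying to predict is precisely one that $w$ herself would mine (only $w$ can witness a valid block on $A$ using coin $c$), we may take $\miner(B) = w$, and then validity at time $t$ becomes equivalent to $V_P(B)=1$. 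Therefore the event that some block with the prescribed parameters has $V_P(B)=1$ is exactly the event $M_P(A,c,t) \neq \bot$, which $w$ computes efficiently, and so $c$ is $1$-locally predictable at $A$ for every $t > t_A$.

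I expect the delicate step to be this last reconciliation: $M_P$ is specified to detect a fully \emph{valid} block, whereas $D$-local predictability is phrased in terms of the single factor $V_P(B)=1$. Making the two match requires (i) restricting attention to blocks mined by the coin's owner $w$, justified because the owner-match indicator forces $\miner(B)=w$ for any valid block witnessed by $c$ on top of $A$, and (ii) observing that the timestamp indicator is automatically satisfied precisely because we are asked about a future timestamp $t > t_A$. Both auxiliary validity indicators thus collapse to $1$, so that over the relevant blocks $V_P(B)=1$ is equivalent to full validity, and the efficiently-computable mining function $M_P$ answers the prediction exactly. (The argument is stated for a valid $A$, which is the relevant case, since honest miners only build on valid blocks.)
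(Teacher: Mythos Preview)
Your proof is correct and follows the same approach as the paper: use $M_P(A,c,t)$ as the efficient predictor and observe that it returns $\bot$ exactly when no suitable block exists. The paper's proof is a terse three-line version of yours; your additional paragraph reconciling ``$V_P(B)=1$'' with ``valid at time $t$'' (by checking that the owner and timestamp indicators are automatically $1$ on the relevant candidate blocks) is a point the paper glosses over, so if anything you are being more careful than the original.
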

\begin{proof}
Fix a coin $c$, block $A$, 
and timestamp $t > t_A$. If there is some block $B$ with $c_B = c$ such that $V_P(B)=1$, $t_B  = t$ and $\predecessor^D(B) = A$, then $M_P(A,c,t)$ outputs such a block. If not, then $M_P(A,c,t) = \bot$. Since $M_P$ is efficiently computable by $\owner(c)$, we have that the coin $c$ is 1-locally predictable at block $A$ for timestamp $t$.
\end{proof}




\noindent In many existing protocols, every coin in a given protocol will be $D$-locally predictable at every block and for every timestamp. In such cases, we will refer to the protocol itself as being D-locally predictable. Intuitively, local predictability captures that a miner can predict in advance when they will be able to produce a block (whereas in Proof-of-Work protocols, they learn only the instant that the block is produced). Global predictability is a stronger definition which describes protocols in which every participant knows in advance if the owner of a given coin is eligible to mine a block.

\begin{definition}[$D$-globally predictable]
A coin $c$ is $D$-globally predictable at block $A$ for timestamp $t$ if every participant of the protocol can efficiently predict whether or not there will exist a block $B$ with $c_B = c$ such that $V_P(B)=1$, where $\predecessor^D(B) = A$ and $t_B  = t$.
\end{definition}

For the reader interested in further understanding predictability, Appendix~\ref{app:predict} contains some sample definitions for $V(\cdot)$ and analyzes their predictability.

\subsection{(Non)-Recency}
The second property we consider is recency, which is just the negation of local predictability. Intuitively, a protocol is $D$-recent at $A$ if the validity of block $A$ depends on some information contained in the last $D$ predecessors of $A$. The main security concern with $D$-recent protocols is that intuitively \emph{each chain has its own pseudorandomness} (but this is not a formal claim).\footnote{By \emph{chain}, we mean a set of blocks $\{\predecessor^i(B)\ |\ i \geq 0\}$.}  We'll again get into more detail with respect to security implications in Section~\ref{SEC:ATTACKS}, but just note here that certain deviations are easier to detect when chains share the same pseudorandomness. 

\begin{definition}[$D$-recent]
A coin $c$ is $D$-recent at a block $A$ for timestamp $t$ if the owner of $c$ \emph{cannot} efficiently predict whether or not there will exist a block $B$ such that $V_P(B)=1$, where $\predecessor^D(B) = A$, $c_B = c$, and $t_B  = t$. 
\end{definition}
\noindent As with predictability, in many existing protocols every coin will be $D$-recent at every block and for every timestamp. We will refer to such examples as $D$-recent protocols. Due to the following observation, further examples illustrating recency aren't necessary, as it's simply the negation of $D$-local predictability.

\begin{observation}
For any $D$, any block $A$ and any timestamp $t$, a coin $c$ is either $D$-locally predictable or $D$-recent.
\end{observation}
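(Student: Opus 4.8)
The plan is to observe that the two definitions in question are, by construction, logical complements of one another, so the claim reduces to an instance of the law of the excluded middle. First I would fix an arbitrary $D$, block $A$, timestamp $t$, and coin $c$, and isolate the single predicate on which both definitions hinge: whether $\owner(c)$ can efficiently predict the existence of a block $B$ with $c_B = c$, $V_P(B)=1$, $\predecessor^D(B) = A$, and $t_B = t$. Call this predicate $\Pi$.

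Next I would note that $D$-local predictability of $c$ at $A$ for $t$ is defined to be exactly the assertion that $\Pi$ holds, while $D$-recency of $c$ at $A$ for $t$ is defined to be exactly the assertion that $\Pi$ fails (the owner \emph{cannot} efficiently predict). Since $\Pi$ is a fixed binary proposition, either $\Pi$ or its negation holds: in the first case $c$ is $D$-locally predictable, and in the second it is $D$-recent. These two cases are mutually exhaustive, which is precisely the statement of the observation.

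The only point worth checking — and the closest thing to an obstacle — is that the two definitions really do reference the identical predicate, with no mismatch in quantifiers or side conditions. I would verify by a line-by-line comparison that both require $V_P(B)=1$, $c_B = c$, $\predecessor^D(B)=A$, and $t_B = t$, and that in both the prediction is made by $\owner(c)$; this confirms there is no gap. Consequently the observation follows immediately by invoking excluded middle, with no further computation required.
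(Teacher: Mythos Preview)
Your proposal is correct and matches the paper's approach exactly: the paper does not even give a proof, stating just before the observation that recency ``is simply the negation of $D$-local predictability,'' which is precisely the excluded-middle argument you spell out. There is nothing to add.
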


\section{Security Implications}\label{SEC:ATTACKS}
In this section, we elaborate on the security implications of predictability and recency.

\subsection{Global Predictability}
Here, we'll describe two attacks against protocols with coins that are globally predictable that we call ``globally predictable selfish-mine'' and ``globally predictable double-spend.'' In the former, the attacker attempts to claim extra mining rewards by delaying the announcement of mined blocks, and in the latter the attacker attempts to receive goods for free by overwriting a transaction in order to effectively spend the same coins twice. Both attacks are also possible, but weaker and a touch more complex, against locally predictable protocols. So we begin with the globally predictable versions, and defer the locally-predictable variants to Appendix~\ref{app:local}. Both attacks have a similar flavor, so we detail selfish mining here, and also defer double spending to Appendix~\ref{app:doublespend}.

\begin{definition}[Globally-Predictable Selfish Mining]  \hfill

\begin{enumerate}[leftmargin=*]
\item At all times $t$, let $A$ denote the current longest chain (that is, let $A$ be the block $B$ you are aware of maximizing $S(B)$).
\item For all $k > 0$, find the minimum time $t'_k$ such that there exists a block $B$, where $\predecessor^D(B) = A$ (for some $D > 0$), $V_P(B) = 1$, you own coin $c_{\predecessor^i(B)}$ for all $i \in [0,D-1]$, $t_B = t'_k$, and $S(B) > S(A)+k$. That is, for all $k$, find the earliest time that you can create a block $B$ with $S(B) > S(A)+k$, \emph{where you created all blocks on the path from $A$ to $B$}. 
\item For all $k >0$, find the minimum time $t^*_k$ such that there exists a block $B$, where $\predecessor^E(B) = A$ (for some $E > 0$), $V_P(B) = 1$, you \emph{don't} own coin $c_{\predecessor^i(B)}$ for all $i \in [0,E-1]$, $t_B = t^*_k$, and $S(B) > S(A)+k$. That is, for all $k$, find the earliest time that \emph{the rest of the network} can create a block $B$ with $S(B) >S(A)+ k$, \emph{and you did not create any blocks on the path from $A$ to $B$}. 
\item If at time $t$, there exists a $k$ such that $t'_k < t^*_k$, immediately stop publishing blocks until $t'_k$ (if there are multiple such $k$, take the largest one). At time $t'_k$, output the promised $B$ and its predecessors.
\end{enumerate}

\end{definition}
\vspace{-5mm}
\begin{figure}[ht]
\begin{center}
\begin{tikzpicture}[thick,scale=0.7, every node/.style={scale=0.7}]
    \node[main node,rectangle] (1) {$A$};
    \node[draw=none] (invt) [below = 0.1cm of 1] {$t$};
    \node[main node,rectangle] (2) [right = 1cm of 1]{$B_1$};
    \node[draw=none] (inv) [right = 1cm of 2] {$\dots$};
    \node[main node,rectangle] (3) [right = 1cm of inv]{$B_{k-1}$};
    \node[main node,rectangle] (4) [right = 1cm of 3]{$B$};
    \node[draw=none] (invt4) [below = 0.1cm of 4] {$t'_k$};
    
    \node[main node,rectangle] (5) [above = 0.8cm of 2]{$\hat{B}_1$};
    \node[draw=none] (inv2) [right = 0.8cm of 5] {$\dots$};
    \node[main node,rectangle] (6) [right = 0.8cm of inv2]{$\hat{B}$};
        \node[draw=none] (invt4) [below = 0.1cm of 6] {$S(\hat{B}) < S(B_k)$};

    \draw [dashed,decoration={markings,mark=at position 1 with {\arrow[scale=2,>=stealth]{>}}},postaction={decorate}] (2) to (1);
    \draw [dashed,decoration={markings,mark=at position 1 with {\arrow[scale=2,>=stealth]{>}}},postaction={decorate}] (inv) to (2);
    \draw [dashed,decoration={markings,mark=at position 1 with {\arrow[scale=2,>=stealth]{>}}},postaction={decorate}] (3) to (inv);
    \draw [dashed,decoration={markings,mark=at position 1 with {\arrow[scale=2,>=stealth]{>}}},postaction={decorate}] (4) to (3);
    \draw [decoration={markings,mark=at position 1 with {\arrow[scale=2,>=stealth]{>}}},postaction={decorate}] (5) to (1);
    \draw [decoration={markings,mark=at position 1 with {\arrow[scale=2,>=stealth]{>}}},postaction={decorate}] (inv2) to (5);
    \draw [decoration={markings,mark=at position 1 with {\arrow[scale=2,>=stealth]{>}}},postaction={decorate}] (6) to (inv2);
\end{tikzpicture}
\end{center}
\caption{Globally Predictable Selfish Mining (illustrated with $S(B) = $ number of predecessors of $B$): If you know you can make a node $B$ at time $t'_k$ whose predecessors after $A$ are also all created by you, and $S(B)>S(\hat{B})$ for all possible $\hat{B}$ that could be created by time $t'_k$ without you (whose predecessors after $A$ are also all created without you), stop publishing blocks and output the full chain $B_1 \rightarrow \ldots \rightarrow B$ all at once at time $t'_k$.} \label{fig:PSM}
\end{figure}

The high-level idea behind the attack is the following: the original selfish mining attack~\cite{eyal2014majority} proposes withholding a block $B$ upon creation (i.e. not broadcasting it). You continue mining on top of $B$, while the rest of the network continues mining on top of $\predecessor(B)$. If you create a new block $B'$ on top of $B$ before the rest of the network creates a new block on top of $\predecessor(B)$, then you now possess the unique longest chain. So you can continue mining on top of $B$ and its descendants until the rest of the network finds a chain that is almost as long as yours. At this point you can announce your chain, causing the entire chain built by the rest of the network to be orphaned (because they will all adopt your uniquely longest chain). Of course, the attack could go completely differently: maybe the rest of the network successfully mines on top of $\predecessor(B)$ before you mine on top of $B$. In this case, now you're in trouble and run the risk of losing $B$ because there is a competing chain of the same length. 

With sufficient global predictability, however, there is no risk! You can predict \emph{before deciding whether to withhold $B$} if you'll mine on top of $B$ before another miner mines on top of $\predecessor(B)$. So you can only withhold those $B$ for which the attack will succeed, completely avoiding the risk. 
For instance, if a protocol is $D$-globally predictable, and $S(B) = $\# predecessors of $B$, then the above attack can be carried out for any $k \leq D$. The attacker's incentives to carry out such an attack of course depend on exactly how minng rewards are distributed, but it is clear that  \emph{globally predictable selfish mining allows the attacker to produce a greater fraction of blocks on the longest chain.} For standard reward schemes this is indeed profitable~\cite{eyal2014majority,carlsten2016instability}.

The frequency with which an attacker will have the ability to predictably selfish-mine depends on the exact nature of the scoring function $S$ and the Proof-of-Stake protocol $P$. In Appendix~\ref{app:attacks}, we analyze the probability of a miner being able to launch a predictable selfish-mining or predictable double spend attack assuming that $S(\cdot)$ is the simple longest-chain rule and that $P$ acts as a random oracle (formal definition in Appendix~\ref{app:attacks} - without some assumption like this it's impossible to begin talking about probabilities). The key takeaway from this section is that \textbf{Predictable Longest-Chain Variant Protocols are vulnerable to Predictable Selfish Mining.}


\vspace{-3mm}

\subsection{Recency}
\label{sec:recency}
Here, we discuss an attack on $D$-recent protocols, which we call Undetectable Nothing-at-Stake. In the Nothing-at-Stake attack miners try to mine on top of many blocks simultaneously, instead of just the one maximizing $S(B)$ over all valid blocks. We call it undetectable if, information theoretically, there is no proof that a miner engaged in Nothing-at-Stake. To make this formal, it is helpful to first recall the behavior of an honest miner.
\vspace{-2mm}
\begin{definition}[Honest Miner] An \emph{Honest Miner} participating in a Proof-of-Stake protocol will do the following at every time step $t$:
\begin{itemize}[leftmargin=*]
\item Find $A$ maximizing $S(A)$ among all blocks that the miner is aware of.
\item For all owned coins $c$, attempt to mine a new block $B = M_P(A,c,t)$. If $B \neq \bot$ announce the new block $B$, otherwise do nothing.
\end{itemize}
\end{definition}

Any deviation from this behavior would be considered dishonest, and problematic for the functionality of the underlying protocol. Sometimes, these deviations will be \emph{detectable}, in the sense that there is clear evidence that a miner deviated from the protocol. Formally:

\begin{definition}[Provable Deviation]
We say that two valid blocks $B$ and $B'$ with $c_B = c_{B'} = c$ are a \emph{provable deviation} by the owner of coin $c$ if $t_{B'} = t_B$, or if both $t_{B'} > t_B$ and $S(B) > S(\predecessor(B'))$. 
\end{definition}

The first part of this definition captures that honest nodes only output one block per timestep. The second half captures that at time $t_B$, the owner of coin $c$ created block $B$. Then at time $t_{B'} > t_B$, they are claiming that $\predecessor(B')$ maximizes $S(\cdot)$ over all blocks they are aware of. Clearly this is not true if $S(B) > S(\predecessor(B'))$. The proof of Proposition~\ref{prop:provable} is in Appendix~\ref{app:recent}.

\begin{proposition}\label{prop:provable}
If a miner is caught having announced a provable deviation, then they must have deviated from the intended protocol. Also, any sequence of announcements from a miner that does \textbf{not} contain a provable deviation could have been sent by an honest miner experiencing latency. 
\end{proposition}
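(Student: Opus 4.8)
The plan is to prove the two sentences separately: the first asserts that a provable deviation is a sound certificate of dishonesty, and the second asserts completeness in the sense that the absence of such a certificate is consistent with honest behavior under some latency pattern. Both reduce to reasoning about a single coin, since a provable deviation is defined per coin.

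For the first sentence I would argue directly from the two clauses of the provable-deviation definition, assuming for contradiction that the owner of $c$ behaved honestly. In the clause $t_{B'}=t_B$: an honest miner runs $M_P(A,c,t)$ at most once per coin $c$ per timestep $t$, against the single predecessor $A$ maximizing $S$ in its view, so it announces at most one block carrying witness $c$ with timestamp $t$; two \emph{distinct} valid such blocks are therefore impossible for an honest owner. In the clause $t_{B'}>t_B$ with $S(B)>S(\predecessor(B'))$: since validity forces $\miner(B)=\owner(c)$, the owner created $B$ and is aware of it at the later time $t_{B'}$; an honest miner mines on the maximum-score block it is aware of, which at $t_{B'}$ has score at least $S(B)>S(\predecessor(B'))$, so it would never mine $B'$ on the lower-scoring predecessor $\predecessor(B')$. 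Either clause contradicts honesty, so a provable deviation implies a genuine deviation.

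For the second sentence the key structural observation is that, by pseudonymity, an observer cannot link blocks carrying different coins to one physical entity, so it suffices to exhibit for each coin $c$ an honest owner of $c$ (experiencing latency) reproducing the subsequence of announcements witnessed by $c$. I would list those blocks $B^c_1,\dots,B^c_m$ in timestamp order; absence of the first clause forces distinct, hence strictly increasing, timestamps $t_1<\dots<t_m$, matching an honest miner that emits at most one block per coin per step. Writing $A_j=\predecessor(B^c_j)$, absence of the second clause gives $S(B^c_i)\le S(A_j)$ for all $i<j$, while the monotonicity of $S$ gives $S(A_i)<S(B^c_i)$; chaining these shows every block this owner is \emph{forced} to hold before time $t_j$ — its own earlier blocks $B^c_i$ and the predecessors $A_i$ it built on, for $i<j$ — has score at most $S(A_j)$. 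I would then pick a latency schedule under which the owner receives $A_j$ (and its ancestors, all of lower score by monotonicity) by time $t_j$ but receives no foreign block of score exceeding $S(A_j)$ until after $t_j$; then $A_j$ is the maximum-score block in its view at $t_j$, and by the defining property of the mining function $M_P(A_j,c,t_j)$ reproduces $B^c_j$. Running these per-coin honest owners in parallel recovers the whole announced sequence, and since no coin individually exhibits a provable deviation, no observer can prove a deviation occurred.

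The main obstacle I expect is the completeness direction, specifically verifying that the intended predecessor $A_j$ really is the maximizer of $S$ in the miner's view at each step, despite the fact that awareness is monotone (a miner cannot ``unsee'' a block it already holds). The crucial point that makes the latency schedule realizable is precisely that the two no-provable-deviation inequalities, combined with the strict monotonicity $S(B)>S(\predecessor(B))$, force \emph{all} previously held blocks — including the past predecessors $A_i$, which could a priori have outscored $A_j$ — to have score at most $S(A_j)$. I would check this chaining carefully, along with the conceptual reliance on pseudonymity that justifies treating each coin as a separate honest participant.
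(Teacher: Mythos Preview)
Your proposal is correct and follows essentially the same approach as the paper. The paper dismisses the first sentence as ``obvious'' and, for the second, constructs explicit awareness sets $S_i = \{\predecessor^\ell(B_j) : \ell \geq 0,\ j < i\} \cup \{\predecessor^\ell(B_i) : \ell \geq 1\}$ per coin and verifies $\predecessor(B_i)$ maximizes $S$ over $S_i$ using exactly the inequality $S(B_j) \le S(\predecessor(B_i))$ you derive; your chaining via $S(A_i) < S(B^c_i) \le S(A_j)$ and your explicit appeal to pseudonymity for the per-coin reduction are, if anything, more carefully spelled out than the paper's version, which handles tie-breaking (in favor of $\predecessor(B_i)$, consistent since $\predecessor(B_i) \notin S_{i-1}$) but is otherwise terser.
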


Proposition~\ref{prop:provable} tells us that without a provable deviation, we cannot ``punish'' suspected deviant miners without the risk of punishing honest but poorly-connected miners. Now, we propose one specific protocol deviation that is guaranteed never to produce a provable deviation.

\begin{definition}[Undetectable Nothing-at-Stake] First, ensure that all of your coins are owned by different public keys. Then, for each owned coin $c$, do the following during every timestep $t$:
\begin{itemize}[leftmargin=*]
\item Find $A$ maximizing $S(A)$ among all blocks that you are aware of.
\item Find $A'$ maximizing $S(A')$ among all blocks that are \textbf{not} descendants of $\predecessor^{D}(A)$.
\item Mine $B = M_P(A,c,t)$ with $\predecessor(B) = A$, and $B' = M_P(A',c,t)$ with $\predecessor(B') = A'$. 
\item If $B \neq \bot$, announce the new block $B$. If $B' \neq \bot$, \textbf{and announcing $B'$ would not create a provable deviation}, announce $B'$. 
\end{itemize}
\end{definition}

\begin{figure}[ht]
\begin{center}
\begin{tikzpicture}[thick,scale=0.7, every node/.style={scale=0.7}]
    \node[main node,rectangle] (1) {};
    \node[main node,rectangle] (2) [right = 1cm of 1]{$\predecessor^{D}(A)$};
    \node[draw=none] (inv) [right = 1cm of 2] {$\dots$};
    \node[main node,rectangle] (3) [right = 1cm of inv]{$\predecessor(A)$};
    \node[main node,rectangle] (4) [right = 1cm of 3]{$A$};
    \node[main node,rectangle] (5) [right = 1cm of 4]{$B$};
    \node[main node,rectangle] (6) [above right = 0.5cm and 1cm of 1]{};
    \node[draw=none] (7) [right = 1cm of 6] {$\dots$};
    \node[main node,rectangle] (8) [right = 1cm of 7]{$A'$};
    \node[main node,rectangle] (9) [right = 1cm of 8]{$B'$};

    \draw [decoration={markings,mark=at position 1 with {\arrow[scale=2,>=stealth]{>}}},postaction={decorate}] (2) to (1);
    \draw [decoration={markings,mark=at position 1 with {\arrow[scale=2,>=stealth]{>}}},postaction={decorate}] (inv) to (2);
    \draw [decoration={markings,mark=at position 1 with {\arrow[scale=2,>=stealth]{>}}},postaction={decorate}] (3) to (inv);
    \draw [decoration={markings,mark=at position 1 with {\arrow[scale=2,>=stealth]{>}}},postaction={decorate}] (4) to (3);
    \draw [dashed,decoration={markings,mark=at position 1 with {\arrow[scale=2,>=stealth]{>}}},postaction={decorate}] (5) to (4);
    \draw [decoration={markings,mark=at position 1 with {\arrow[scale=2,>=stealth]{>}}},postaction={decorate}] (6) to (1);
    \draw [decoration={markings,mark=at position 1 with {\arrow[scale=2,>=stealth]{>}}},postaction={decorate}] (7) to (6);
    \draw [decoration={markings,mark=at position 1 with {\arrow[scale=2,>=stealth]{>}}},postaction={decorate}] (8) to (7);
    \draw [dashed,decoration={markings,mark=at position 1 with {\arrow[scale=2,>=stealth]{>}}},postaction={decorate}] (9) to (8);
\end{tikzpicture}
\end{center}
\caption{Undetectable Nothing-at-Stake: The attacker creates two blocks, $B$ and $B'$. There is no resource cost in checking if both are valid. Depending on reward scheme, there may be an expected monetary gain for announcing both.}\label{fig:UNaS}
\end{figure}
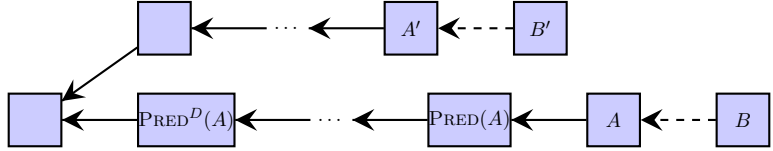

By definition, announcing $B$ and $B'$ at time $t$ does not create a provable deviation at time $t$. But, it could be the case that at some later time $t' > t$, announcing some new block $C$ would reveal that the miner was not following the honest protocol. We show next that this is impossible.

\begin{observation}\label{obs:undetectable}
The Undetectable Nothing-at-Stake strategy never produces a provable deviation. 
\end{observation}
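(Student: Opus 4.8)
The plan is to reduce to a per-coin analysis and then exploit the asymmetry between how the two candidate blocks are announced each timestep. Since a provable deviation by definition involves two valid blocks witnessed by the \emph{same} coin $c$ (and the attacker has placed each coin under a distinct public key), it suffices to fix a single coin $c$ and show that among all blocks the attacker ever announces using $c$, no two form a provable deviation. For each timestep $t$ the strategy produces at most two such blocks: the ``main'' block $B_t$ with $\predecessor(B_t) = A_t$, announced unconditionally whenever $B_t \neq \bot$, and the ``side'' block $B'_t$ with $\predecessor(B'_t) = A'_t$, announced only after explicitly verifying that doing so creates no provable deviation with anything announced so far. Both carry timestamp $t$.

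First I would dispose of the same-timestamp clause. The only two blocks produced in a given timestep are $B_t$ and $B'_t$, and if both are non-$\bot$ then announcing $B'_t$ (timestamp $t$) after $B_t$ (timestamp $t$) would immediately be a provable deviation, so the built-in check suppresses $B'_t$. Hence the attacker never announces two blocks with a common timestamp, and the first clause of the definition can never be witnessed. Next I would observe that the only announcements made \emph{without} the provable-deviation check are the main blocks. Consequently, for any pair of announced blocks $X$ (earlier) and $Y$ (later) that could form a provable deviation via the second clause, if $Y$ is a side block then its check --- run against the already-announced $X$ --- would have blocked its announcement; so it is enough to rule out the case where the later block $Y$ is a main block.

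The crux is then a monotonicity argument. Write $Y = B_{t_Y}$, so $\predecessor(Y) = A_{t_Y}$, where by construction $A_{t_Y}$ maximizes $S(\cdot)$ over every block the attacker is aware of at time $t_Y$. The earlier block $X$ was created and announced by the attacker at some time $t_X < t_Y$, so the attacker is certainly aware of $X$ at time $t_Y$; therefore $S(\predecessor(Y)) = S(A_{t_Y}) \geq S(X)$. This directly contradicts the second clause, which would require $S(X) > S(\predecessor(Y))$. Hence no pair forms a provable deviation. The step I expect to be the main obstacle --- and the only place the argument has real content --- is precisely this last case: the main blocks are broadcast with no safety check, so I must argue that an old withheld side block can never ``out-score'' the predecessor of a freshly announced main block; the resolution is that the main block is always mined on the globally highest-scoring block the attacker holds, which by construction dominates every block the attacker has ever produced, side blocks included. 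I would finish by remarking that the restriction of $A'_t$ to non-descendants of $\predecessor^{D}(A_t)$ plays no role here --- it is needed only to make the deviation \emph{effective} (i.e. a genuine fork), not to keep it undetectable.
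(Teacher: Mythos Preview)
Your proof is correct and follows essentially the same approach as the paper: the explicit safety check handles all side-block announcements, while main-block announcements are harmless because their predecessor $A$ maximizes $S(\cdot)$ over every block the attacker has seen (and hence over every block the attacker has previously announced), ruling out the second clause of a provable deviation. Your treatment is slightly more explicit than the paper's---in particular you spell out why the same-timestamp clause is vacuous---but the underlying argument is identical.
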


So the proposed deviation will never ``get caught'' (that doesn't necessarily mean that a clever protocol can't still enact punishment - see Section~\ref{sec:discussion}). But we also want to understand whether the proposed deviation will ever actually deviate (it's conceivable that the safety check will prevent the miner from ever announcing an ``illegitimate'' block), and this is where recency comes in. Essentially what's going on is that if a protocol is $D$-Recent, then whether or not a you can build a valid block on top of $A'$ with coin $c$ at time $t$ actually depends on some of the blocks between $A'$ and $\predecessor^D(A)$ (which, by Chain Dependence + Monotonicity, the validity of any block built on top of $A$ \emph{doesn't} depend on). So each coin $c$ is kind of getting a ``fresh shot'' at being eligible to mine a block on top of $A'$ during time $t$, and one might reasonably expect this shot to succeed with non-zero probability (with the success probability of course dependent on the exact behavior of $V_P(\cdot)$). 

Again, any meaningful probabilistic analysis requires some assumption on $S$ and $P$ (otherwise we don't even have a probability space to work with). We consider the case where $S$ is the simple longest chain rule and $P$ acts as a random oracle, and show essentially that when the number of coins in the system is larger than the recency of the protocol, then Undetectable Nothing-at-Stake is announcing twice as many blocks as the honest strategy. See Appendix~\ref{app:attacks} for a formal statement. 

Again, the profitability of Undetectable Nothing-at-Stake depends on the exact reward scheme, but what is clear is that Undetectable Nothing-at-Stake allows the attacker to (undetectably) produce a greater fraction of blocks. The main takeaway from this section is that \textbf{Recent Longest-Chain Variant Protocols are vulnerable to Undetectable Nothing-at-Stake}. 




\section{Discussion}
In summary, we've shown that even in an ideal model with perfect connectivity and no latency, every Longest-Chain-Variant-Proof-of-Stake protocol has some undesirable property. Below, we discuss possible fixes for the attacks enabled by these properties, draw conclusions and pose directions for future work.
\paragraph{Preventing predictable selfish mining.} Preventing predictable selfish mining is challenging, but some clever ideas exist in the literature. At a high level, Fruitchains~\cite{pass2016fruitchains}, Ouroboros~\cite{kiayias2017ouroboros}, and Tezos~\cite{Tezos} design protocols where blocks need to be ``supported'' once mined (eligibility to support is also proportional to stake), so one would not only need a majority of blocks mined but also a majority of ``support tokens(/fruit)'' in a given window to successfully selfish mine. Both Snow White \cite{bentov2016snow} and Ouroboros \cite{kiayias2017ouroboros} provide proofs that any deviation from their prescribed protocol can only provide a small $\varepsilon$ in additional mining rewards. However,~\cite{bentov2016snow} notes that it would be preferable for known attacks to be \emph{strictly} disincentivized (more on this in Appendix~\ref{sec:examples}), and it remains open whether these reward schemes accomplish this.

\paragraph{Preventing predictable double-spend.} A simple defense specifically against predictable double-spend attacks is to accept long confirmation times (e.g. a transaction is not considered ``finalized'' by vendors until several blocks have been announced descending from the block containing the transaction). Our analysis (Appendix~\ref{app:attacks}) indicates that several hundred blocks might be necessary, making it virtually impossible to have quick confirmation times in a predictable Proof-of-Stake protocol without further defenses. For example, the authors of the  Ouroboros~\cite{kiayias2017ouroboros} Protocol (which is predictable) suggest using confirmation times of 148 minutes to defend against double spend attacks by an attacker controlling $40\%$ of the stake when blocks are created at a rate of one per minute (and this is consistent with our analysis).



\paragraph{Defending against Undetectable Nothing-at-Stake.} Existing literature proposes roughly three paradigms that attempt to defend against Undetectable Nothing-at-Stake. The most common defense in commercial protocols is to set $D$ very large (these protocols are therefore $D$-locally predictable for large $D$), and to use some form of ``checkpointing'' every $\leq D$ blocks. This ``checkpointing'' might be run externally by a trusted party, hard-coded into the protocol, or just a form of trust among network participants that they would never seriously consider a fork more than $D$ blocks back. In practice, there don't seem to have been any serious issues with this approach, but to our knowledge its security hasn't previously been rigorously analyzed.

Algorand proposes a different approach: instead of using a longest-chain variant, it uses a Byzantine consensus protocol. Under some network connectivity assumptions, they show that the probability of a fork is negligible. As such, any deviant behavior that results in a fork (such as Undetectable Nothing-at-Stake) can be readily recognized as malicious, and safely ignored. 

Ethereum's Casper~\cite{casper} proposes a third solution that they call ``dunkles'': punish every miner whose block winds up being orphaned (not a predecessor of the block maximizing $S(A)$). The high-level goal of this is to essentially copy the incentives from Proof-of-Work: if your block is orphaned, you still lose the electricity that went into mining it. 
By punishing the miner of every orphaned block, some honest miners will get punished just by bad luck, but it will also discourage attackers from mining off the longest chain. This seems like a promising direction, but there is currently no formal specification or rigorous evaluation of the proposal.

\paragraph{There's no reason for global predictability.} There's a real tradeoff to explore between $D$-local predictability and $D$-recency (since one cannot avoid both, by definition). But there doesn't seem to be any benefit to global predictability, only the risk of stronger predictable-selfish-mining attacks. There also doesn't seem to be a black-box reduction stating that mixing any existing protocol properly with digital signatures removes global predictability (although this would be a great future result), but it seems likely that a clever use of digital signatures as in Algorand could modify most existing protocols to be no longer globally predictable with low cost. 

\paragraph{Trusted External Randomness, Trusted Checkpointing, or Not?} Our work shows that there is a fundamental difference between trusted external randomness and ``internal pseudorandomness'' derived pseudorandomly using the cryptocurrency itself. Essentially, the difference is that all internal pseudorandomness is in the end locally predictable or recent and therefore susceptible to some form of attack. External randomness (such as the NIST beacon) is not, and recall that protocols that use external randomness fail to satisfy Chain Dependence (so none of our results apply). 
Can such a protocol be the basis for a secure, incentive compatible Proof-of-Stake protocol? Even if the answer is yes, how does one resolve the (possibly just-as-challenging) issue of obtaining trusted randomness?\footnote{Chapter 9.4 of~\cite{narayanan2016bitcoin} describes how functional cryptocurrencies can serve as sources of trusted randomness. So it is indeed unclear whether trusted randomness is really an easier problem.} 
Trusted checkpointing provides a cheap solution to Undetectable Nothing-at-Stake, and does seem simpler than trusted randomness.  
Can trusted checkpointing provide guarantees that are otherwise hard (or impossible) to come by?

\paragraph{Byzantine Consensus versus Longest-Chain Variants.} With the exception of Algorand and Casper: the Friendly Finality Gadget, every proposal that the authors are aware of uses a longest-chain variant. The upside of longest-chain variants is that they are inherently robust to Eclipse attacks. Byzantine consensus protocols require some network connectivity assumptions in order to safely ignore messages sent too far in the past, and are less robust to Eclipse attacks. Are there provable limits to what can be achieved by longest-chain variants, necessitating the use of Byzantine Consensus? Or is it possible to achieve the same guarantees with a true Longest-Chain variant?

\paragraph{Rigorously and Transparently Evaluate Protocols in the Ideal Model.} Our work shows that already it is quite challenging to design incentive compatible Longest-Chain Variants in the ideal model. Numerous reasonable ideas have been proposed to address the vulnerabilities arising from predictability and recency, but none have transparent yet rigorous analysis. We believe that the vulnerabilities we've uncovered are serious enough that future Proof-of-Stake proposals should include transparent proofs of how they defend against predictable double-spend, predictable selfish-mine, and undetectable nothing-at-stake in the ideal model. 

It is obviously necessary to also continue evaluating network security aspects of proposed protocols, but it is important that any incentive-related security claims in these rich models easily map to transparent claims in the ideal model.

\label{sec:discussion}

\bibliographystyle{plainnat}
\bibliography{references}

\appendix
\section{Examples Aiding Definitions From Section~\ref{sec:prelim}}\label{app:prelim}
\subsection{Assumption~\ref{assumption:protocol}}\label{app:assumption}
To get some intuition for Assumption~\ref{assumption:protocol}, here are two toy examples that violate Chain Dependence and Monotonicity (respectively). First, consider a protocol where the validity of a block depends on a trusted source of external randomness (e.g. the NIST beacon). This external randomness is not contained in the blockchain itself, so such a protocol does not satisfy Chain Dependence. 

Next, consider a protocol that declares a block $B$ invalid if its creator proposed another block $B'$ within the same $2$-week time period. Then $B$ will be considered valid for the graph that contains only $B$ and its predecessors, but invalid for the graph which contains $B, B'$ and all of their predecessors. So this protocol violates Monotonicity. These protocols are not absurd, but are much more vulnerable to Eclipse attacks than virtually all existing proposals.

\subsection{Freezing}\label{app:freeze}
In the definition for Proof-of-Stake we provided, it may sometimes be desirable to ``freeze'' coins used to mine for longer than just one block. That is, the first indicator in bullet three requires that coin $c$ be owned by $\miner(B)$ in the block before $B$ (so that the miner could not move $c$ in the same block s/he is trying to mine). For many protocols, it may be desirable to additionally insist that the owner of coin $c$ did not change in any of the last $F$ blocks (i.e. that $c$ was \emph{frozen} for $F$ blocks before used for mining). We note that our definition can easily be modified to replace $\mathbb{I}\{ \owner_{\predecessor(B)}(c_B) = \miner(B) \}$ with $\mathbb{I}\{ \owner_{\predecessor^i(B)}(c_B) = \miner(B) \ \forall i \leq F\}$, where $F$ is a freezing parameter of $P$ (and Definition~\ref{def:PoS} is then a special case with $F = 1$ hardcoded). Of course, this could also be offloaded into $V_P(\cdot)$, but the astute reader will later notice that treating freezing separately makes future technical definitions cleaner. 

\subsection{Predictability}\label{app:predict}
\noindent Below we provide examples of predictable and unpredictable protocols for the sake of further explanation. In what follows, let $A = \predecessor^D(B)$ and $T \in \reals$ be some positive threshold. 

\begin{itemize}[leftmargin=*]
\item Protocol $P_1$ where $V_{P_1}(B) = 1$ if and only if $\sha(A, t_B, c_B) <T$.\footnote{Note that this is (essentially) the initial proposal made in Peercoin: \url{https://peercoin.net/}.} Every coin $c$ in $P_1$ is $D$-globally predictable at $A$, for all $D$ and all $A$. This is because every user can compute $\sha(\predecessor^{D-1}(A), t', c')$ for all $t', c'$ and check whether or not it's $<T$. So every user can compute the minimum $t'_1$ such that a block can be built on top of $A$. Similarly, every user can compute $\sha(\predecessor^{D-i}(A), t', c')$ for all $i \leq D$, which determines the minimum $t'_i$ that a block can be built with $\predecessor^{i}(B) = A$ (note that $t'_i$ must be monotonically non-decreasing in $i$). If $t'_D < t$, then a block $B$ indeed exists with $\predecessor^D(B) = A$, $c_B = c, t_B = t$, $V_{P_1}(B) = 1$. Otherwise, no such $B$ exists. 

\item Protocol $P_2$ where $V_{P_2}(B) = 1$ if and only if $\sha(t_B, c_B) <T$. Every coin $c$ in $P_2$ is $D$-globally predictable at $A$ for all $D$ and all $A$. This is because every user can compute $\sha(t', c')$ for all coins $c'$ and times $t'$. Therefore, every user can determine all potential timesteps where a block could be created. If there are $\geq D$ such timesteps between $t_A$ and $t$, and $\sha(t, c) < T$, then the answer is yes. If not, then the answer is no. 

\item Protocol $P_3$ where:\footnote{$P_3$ is based on the seed-selection portion of Algorand.}
\begin{itemize}
\item Each block $B$ contains a signature $s_B$ computed by $\miner(B)$.
\item $M_{P_3}(B,c,t)$ outputs a block $B$ with $s_B = \text{SIG}_{\owner(c)}(\sha(s_A),t)$, where $\text{SIG}_{\owner(c)}(\cdot)$ denotes the function which digitally signs a message using the secret key of $\owner(c)$.
\item $V_{P_3}(B) = 1$ if and only if $\sha(s_B) < T$.
\end{itemize}
$P_3$ is a good example to clarify potentially subtle aspects of the definitions. Every coin in $P_3$ is $1$-locally predictable at $A$, for all $A$ (as with all Proof-of-Stake protocols). In addition, $\owner(c)$ can computationally efficiently find certain kinds of blocks $B$ with $t_B = t$ and $\predecessor^D(B) = A$: namely, those for which $\miner(B') = \owner(c)$ for all $B' = \predecessor^i(B)$ for $i \leq D$. This is because $\owner(c)$ can efficiently compute all blocks $B'$ that they themselves can build on top of $A$, and then all blocks that they themselves can build on top of these blocks, etc. using $M_{P_3}$. However, $\owner(c)$ cannot computationally efficiently predict whether there exists a block $B$ with $\predecessor^D(B) = A$, $t_B = t$, but $\miner(\predecessor^i(B)) \neq \owner(c)$ for some $i < D$. This is because knowing the existence of this block would require being able to digitally sign as $\miner(\predecessor^i(B))$, which $\owner(c)$ cannot do computationally efficiently. So every coin $c$ in $P_3$ is \emph{not} $D$-locally predictable for any $D > 1$. Similarly, no coin is $D$-globally predictable for any $D$ because in order to know whether $B$ is valid, one must be able to digitally sign messages as $\owner(c_B)$ (which one cannot do computationally efficiently unless one is $\owner(c_B)$).
\end{itemize}

\section{Omitted Proofs from Section~\ref{SEC:ATTACKS}}\label{app:attacks}
\subsection{Locally Predictable Selfish Mining}\label{app:local}
Here, we'll show how to modify the globally predictable attacks of the previous section to be locally predictable. Locally predictable selfish mining is no longer risk-free (because you can't predict when the rest of the network will find their blocks), but you can still gain a statistical edge by knowing when in the future your blocks will come (essentially, if your blocks come earlier than normal, this is a good time to withhold. If your blocks come later than normal, this is a bad time).  

\begin{definition}[Locally-Predictable Selfish Mining]  \hfill
\begin{enumerate}[leftmargin=*]
\item For all $k>1$, define a time cutoff $T_k$.
\item For all $k > 1$, find the minimum time $t'_k$ such that there exists a block $B$, where $\predecessor^D(B) = A$ (for some $D > 0$), $V_P(B) = 1$, you own coin $c_{\predecessor^i(B)}$ for all $i \in [0,D-1]$, $t_B = t'_k$, and $S(B) > S(A)+ k$. That is, for all $k$, find the earliest time that you can create a block $B$ with $S(B) > S(A)+ k$, \emph{where you created all blocks on the path from $A$ to $B$}. 
\item If at time $t$, there exists a $k$ such that $t'_k \leq t+T_k$, immediately stop publishing blocks until $t'_k$ (if there are multiple such $k$, take the largest one). At time $t'_k$, output the promised $B$ and its predecessors.
\end{enumerate}



\end{definition}

As referenced above, the key difference between Locally-Predictable and Globally-Predictable Selfish Mining is that you can no longer compute how long it will take for the rest of the network to produce a block with score $S(A) + k$. Still, you can get a statistical edge because you know at what time in the future you'll be able to produce a block with score $S(A) + k$. So set the cutoff $T_k$ so that you will actually gain in expectation by withholding. Note that there certainly exists such a $T_k$ (e.g. $T_k = t$), although the probability of producing blocks before $T_k$ might be extremely small (e.g. for $T_k = t$ it is zero). Still, for all existing Longest Chain variants that fit our framework, for all $\alpha > 0$, a user with an $\alpha$-fraction of the total stake could set appropriate thresholds for Locally Predictable Selfish Mining and produce a $>\alpha$-fraction of the total blocks on the longest chain (and also strictly increase their expected reward).

Moreover, since Locally Predictable Selfish Mining only requires predicting your own blocks, $1$-Local Predictability actually suffices for this attack. By Observation~\ref{obs:1locallypredictable}, this means that in fact \emph{every Longest-Chain Variant Proof-of-Stake Protocol is vulnerable to Locally Predictable Selfish Mining}. 
\begin{lemma}
Every Proof-of-Stake protocol $P$ is vulnerable to the Locally Predictable Selfish Mining attack. In particular, every time a miner is eligible to mine a block, she can also attempt to launch a Predictable Selfish Mining attack (for all $k$).
\end{lemma}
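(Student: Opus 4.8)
The plan is to reduce the feasibility of the attack entirely to Observation~\ref{obs:1locallypredictable}. The conceptual point is that Locally-Predictable Selfish Mining, in contrast to its globally-predictable version, only ever asks the attacker to reason about chains that she builds \emph{herself}: in Step 2 she seeks a block $B$ for which she owns every coin $c_{\predecessor^i(B)}$ on the path from $A$ to $B$. Hence the only predictive power the attack requires is the ability to foresee her own mining eligibility, which is precisely $1$-local predictability, and Observation~\ref{obs:1locallypredictable} guarantees this for every coin at every block and every timestamp in \emph{every} Proof-of-Stake protocol.

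First I would show that the quantity $t'_k$ in Step 2 is computable by the attacker using only $1$-local predictability. By Observation~\ref{obs:1locallypredictable}, for any block $A'$ the attacker is aware of and any timestamp $t$, she can efficiently decide---by evaluating $M_P(A',c,t)$ for each coin $c$ she owns---whether she can mine a valid block directly on top of $A'$ at time $t$, and if so obtain that block. Chaining these checks (mine some $B_1$ on $A$, then some $B_2$ on $B_1$, and so on) lets her simulate the forward construction of an entirely self-owned chain to any depth. Since $S$ is monotone increasing (Definition~\ref{def:longestchain}), each appended block strictly raises the score, so to reach score exceeding $S(A)+k$ it suffices to search over self-owned chains up to a bounded depth; the earliest timestamp at which such a chain first attains score $> S(A)+k$ is exactly $t'_k$.

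Next I would observe that the remaining ingredients of the attack require no prediction at all. Choosing the cutoffs $T_k$ in Step 1 and applying the decision rule in Step 3 (withhold until $t'_k$ whenever $t'_k \le t + T_k$, taking the largest such $k$) are purely internal bookkeeping decisions made from information the attacker already holds. Combining this with the computability of every $t'_k$, the attacker can evaluate the decision rule for all $k$ at every time $t$; in particular, whenever she is currently eligible to mine a block (so that there is something to withhold), she can attempt the attack. Since Observation~\ref{obs:1locallypredictable} applies to every Proof-of-Stake protocol $P$, this establishes the lemma.

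The main obstacle is making the computation of $t'_k$ genuinely efficient: a naive enumeration of all self-owned chains could blow up both with depth and with the number of owned coins. The step to get right is arguing that the per-level operation is a single efficient $M_P$ evaluation per owned coin (guaranteed by the mining function's definition in Definition~\ref{def:PoS}), and that monotonicity of $S$ bounds the relevant search depth, so that the earliest-completion-time computation reduces to a tractable forward search over discretized timesteps rather than an exhaustive chain enumeration. Everything else is essentially definitional.
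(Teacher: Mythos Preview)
Your proposal is correct and takes essentially the same approach as the paper: both reduce entirely to Observation~\ref{obs:1locallypredictable} and argue that the attacker can iteratively apply $M_P$ to her own coins to enumerate all self-built chains on top of $A$, thereby computing each $t'_k$. One small correction: monotonicity of $S$ alone does not bound the search depth (it only says score strictly increases, not by how much); the paper instead bounds the enumeration by the time cutoff, collecting only blocks with $t_B < t + T_k$, which is exactly the ``forward search over discretized timesteps'' you identify in your final paragraph.
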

\begin{proof}
By Observation \ref{obs:1locallypredictable}, for any block $B$, any coin $c$ owned by the attacker, and any time $t$, the attacker can determine whether or not there exists a block $B'$ with $\predecessor(B') = B$, $c_{B'} = c$, $t_{B'} = t$ and $V(B') = 1$. Therefore, the attacker can do the following for all $k$:
\begin{itemize}[leftmargin=*]
\item Initialize $\mathcal{B} = \{A\}$. 
\item While there exists a block $B$ with $c_B$ owned by the attacker, $t_B < t + T_k$, $\predecessor(B)\in \mathcal{B}$ (in English: while the attacker can mine a block on top of some block in $\mathcal{B}$ at time $< t + T_k$): Add to $\mathcal{B}$ all such blocks.
\item Let $B_k = \arg\max_{B \in \mathcal{B}}\{S(B)\}$. If $S(B_k) > S(A) + k$, then we've found an opportunity to selfish mine. If not, then there's no opportunity.  
\end{itemize}

Note that the attacker can implement every step above due to Observation~\ref{obs:1locallypredictable} and the prior reasoning, and it's trivial to see that the algorithm above implements locally predictable selfish mining.
\end{proof}

\paragraph{Minor Improvements with Greater Local Predictability.} If a protocol happens to have $D$-locally predictable coins for larger $D$ instead of just $1$-locally predictable, then an attacker intending to launch a predictable selfish mining attack is aware $D$ blocks in advance. It's unclear that this advanced notice is significant, but it's not completely negligible. For example, a miner who can predict that she will likely succeed with a $k>6$ locally predictable selfish mine several hours in the future may offer to accept bribes in order to fork for a double-spend attack (similarly, the miner could try to prepare their own double-spend attack - see Appendix). The idea is that if the goal of this attack is simply to get increased mining rewards, then the advanced notice doesn't help. But if the goal is to use this attack in more ``outside-the-box'' ways (or to do a predictable double-spend), then the advanced notice might actually help.

The key takeaway from this subsection is that \textbf{Every Longest-Chain Variant Proof-of-Stake Protocol is vulnerable to Locally Predictable Selfish Mining}. Again, the improved rewards for participating in the attack vary from protocol to protocol, but in all existing protocols that the authors are aware of, the reward increase is non-zero. There are interesting ideas for potential defenses posed in both commercial and academic protocols, but without transparent analyses.

\subsection{Predictable Double-Spend}\label{app:doublespend}
We first describe a ``predictable double-spend,'' which requires an initial definition of a ``confirmation time.''
Confirmation times aren't hard-coded into cryptocurrencies, but determine when a vendor is comfortable considering a transaction ``finalized'' and exchanging goods. The required confirmation may vary depending on the transaction. For instance, a cafe may be willing to hand over a cup of coffee even before the payment transaction has been included in a Bitcoin block (but at least verifying that the transaction has been digitally signed and broadcast). But a homeowner may not hand over the deed to their house until several Bitcoin blocks have been mined on top of the block containing the payment transaction. 

\begin{definition}[Confirmation Time] For a given block $B$ containing transaction $x$, and block $B'$ a descendant of $B$, we say that $x$ is \emph{confirmed} by $B'$ if a vendor would exchange whatever goods are being purchased by $x$ once believing (according to whatever Longest-Chain-Variant is used) the history defined by $B'$. 
\end{definition}

Below we now describe a predictable double-spend attack which is costless to the attacker, and may result in the attacker receiving goods for free if confirmation times are lax enough.

\begin{definition}[Globally Predictable Double-Spend] Do the following:
\begin{enumerate} 
\item Produce a transaction $x$ in order to purchase some good, but don't yet announce it to the network or vendor.
\item At all times $t$, let $A$ denote the current longest chain (that is, let $A$ be the block $B$ you are aware of maximizing $S(B)$).
\item For all $k > 0$, find the minimum time $t'_k$ such that there exists a block $B$, where $\predecessor^D(B) = A$ (for some $D > 0$), $V_P(B) = 1$, you own coin $c_{\predecessor^i(B)}$ for all $i \in [0,D-1]$, $t_B = t'_k$, and $S(B) > S(A)+k$. That is, for all $k$, find the earliest time that you can create a block $B$ with $S(B) > S(A)+k$, \emph{where you created all blocks on the path from $A$ to $B$}. 
\item Similarly, for all $k >0$, find the minimum time $t^*_k$ such that there exists a block $B$, where $\predecessor^E(B) = A$ (for some $E > 0$), $V_P(B) = 1$, you \emph{don't} own coin $c_{\predecessor^i(B)}$ for all $i \in [0,E-1]$, $t_B = t^*_k$, and $S(B) > S(A)+k$. That is, for all $k$, find the earliest time that \emph{the rest of the network} can create a block $B$ with $S(B) >S(A)+ k$, \emph{where you did not create any blocks on the path from $A$ to $B$}. 
\item If at time $t$, there exists a $k$ such that $t'_k < t^*_k$, immediately announce the transaction $x$ and stop publishing blocks until $t'_k$ (if there are multiple such $k$, take the largest one).

\item Hope that at some $t'' < t'_k$, a block $B'$ is announced that confirms $x$.
\item Get the good from the vendor at time $t'$.
\item  At time $t'_k$, output the promised $B$ and its predecessors.
\end{enumerate}
\end{definition}

It should be clear that if this attack is successful, it will result in the attacker getting their goods for free, as $B$ will become the new history and no history built upon $B$ can possibly contain the transaction $x$. Notice that the above description is essentially the same as the globally predictable selfish-mining attack, but the goal of predictable selfish mining is to claim extra mining rewards rather than free goods for a canceled transaction. For locally predictable protocols, there is a corresponding locally predictable double-spend attack that is identical to the locally predictable selfish mining attack, except for the addition of the steps announcing the transaction $x$ which the attacker intends to cancel.

Let's see how this plays out with a relevant example. To do any meaningful analysis, we'll want to restrict attention to protocols with mining functions that \emph{act as a random oracle} (otherwise we can't even begin to talk about probabilities). By this, we mean protocols $P$ where no matter how many times $M_P$ has already been queried on other inputs, querying $M_P$ on fresh input appears to be an independent random variable (this is a standard cryptographic assumption when discussing ideal hash functions). 

\begin{definition}[Random Oracle] 
We say that a mining function $M_P$ \emph{acts as a random oracle} with recency $\ell$ if there exists a function $M^*$ such that for all blocks $B$, coins $c$ and timestamps $t$:
\begin{itemize}
\item $M_P(B, c, t) = M^*(\predecessor^\ell(B), c, t)$ 
\item There exists a success probability function $s(\cdot)$ such that $M^*(\predecessor^\ell(B), c, t) \neq \bot$ with probability $s(\predecessor^\ell(B))$.
\item For all $(B_1, t_1, c_1),\ldots,(B_k,t_k, c_k)$ the set $\{M^*(B_1,c_1,t_1),\ldots M^*(B_k,c_k,t_k)\}$ are independent random variables.\footnote{Obviously no protocol actually achieves this quality of randomness, but all existing proposals use some proxy such as a call to \sha ~or a ``Follow-the-Satoshi''~\cite{bentov2016cryptocurrencies,Tezos}.}
\end{itemize}
\end{definition}

So let's now consider Bitcoin's canonical longest-chain variant: $S(B) = \min\{\ell, \predecessor^\ell(B) = \bot\}$, and a protocol where $M_P$ acts as a random oracle with recency $D$. This means that the potential attacker is capable of computing the following two quantities correctly with high probability:
\begin{itemize}
\item For all $\ell \leq D$, what is the minimum $t^*$ such that there exist $t < t_1,\ldots, t_\ell = t^*$ and coins $c_1,\ldots, c_\ell$ \textbf{all not owned by the attacker}, and $M_P(\predecessor^{D-\ell}(B), c_i, t_i) \neq \bot$. Call this number $H_B(\ell)$. 
\item For all $\ell \leq D$, what is the minimum $t^*$ such that there exist $t < t_1,\ldots, t_\ell = t^*$ and coins $c_1,\ldots, c_\ell$ \textbf{all owned by the attacker}, and $M_P(\predecessor^{D-\ell}(B), c_i, t_i) \neq \bot$. Call this number $A_B(\ell)$. 
\end{itemize}

That is, for every $\ell \leq D$ the attacker makes the following thought experiment: ``Can I create $\ell$ blocks faster than the rest of the users?''. If $A_B(\ell) < H_B(\ell)$, i.e. the answer to the previous question is ``Yes'', the attacker can attempt to launch a predictable selfish-mine or predictable double-spend attack (but of course, whether or not the attacks achieve the attacker's goal depends on further details of the protocol). Note that in the setting where every coin is $D$-locally predictable at all blocks $A$ and timestamps $t$, the attacker can compute $A_B(\ell)$ but not $H_B(\ell)$. However, if $M_P$ acts as a random oracle, then the attacker can estimate the value of $H_B(\ell)$ with high probability. If every coin is $D$-globally predictable, then the attacker can compute $H_B(\ell)$ with probability 1.

Now the first question one might wish to ask is: given that the attacker controls an $\alpha$-fraction of the stake, what is the probability that we will ever see $A_B(\ell) < H_B(\ell)$ for any block before the end of the universe (a similar question was addressed by figures in~\cite{kiayias2017ouroboros})? Assuming that $M_P$ acts as a random oracle (with any recency), the probability that $A_B(\ell) < H_B(\ell)$ can be analyzed in the following way. $2\ell-1$ biased coins are flipped; the probability of heads is $\alpha$. 
Notice that every outcome of this experiment will have either at least $\ell$ heads or at least $\ell$ tails, but not both. Therefore, the probability that we see at least $\ell$ heads is exactly the same as the probability that the attacker creates $\ell$ blocks faster than the rest of the users. The random variable of interest follows the binomial distribution with parameters $\alpha$ and $2\ell - 1$, so the calculation is straightforward. Let $p_{\alpha,\ell} = Pr[ \text{attacker with $\alpha$ fraction of stake wins the race of $\ell$ blocks} ]$. We say that $\ell$ is \textit{safe} if $p_{\alpha,\ell} < T$ for some tolerance threshold $T$. For every such threshold $T$ we can find the smallest safe $\ell^*_{\alpha,T}$, i.e. the smallest window in which an adversary can win the race against the rest of the users with probability within the threshold. Intuitively, we would like to set a threshold so that $\ell^*_{\alpha,T}$ is small. 

The next question is how to set this threshold. If a block is created every minute, a cryptocurrency that lasts $1000$ years has approximately $5*10^8$ blocks. If we want the probability that the attacker succeeds at some time during the lifetime of the currency to be at most $10^{-7}$, the threshold $T$ should be set to $2 \cdot 10^{-16}$. Given $T$, we can plot $\ell^*_{\alpha,T}$ as a function of $\alpha$. See Figure~\ref{fig:a vs min safe ell}.
\begin{figure}[htbp]
	\begin{center}  
		\includegraphics[scale=0.7]{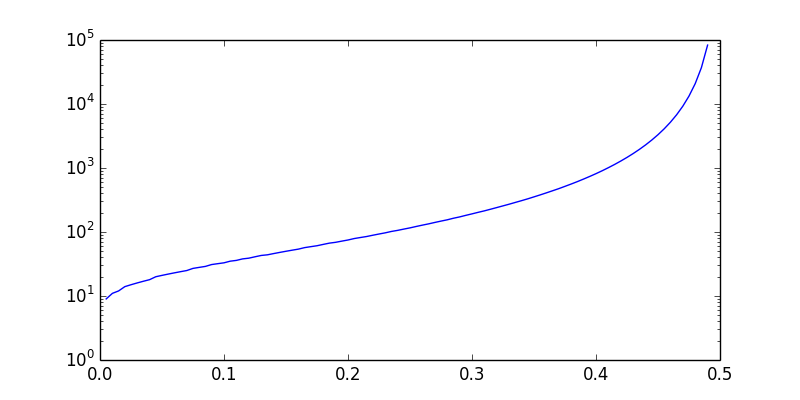}
	\end{center}
	\caption{$\alpha$ and $\ell^*_{\alpha,T}$ on a log-scale}\label{fig:a vs min safe ell}
\end{figure}
So if, for instance, one is comfortable assuming that no miner will exceed 40\% of the total stake, one can reasonably expect to never see $A_B(816) < H_B(816)$ for any block $B$ throughout the cryptocurrency's entire lifetime. 


\subsection{Recency}\label{app:recent}
\begin{proof}[Proof of Proposition~\ref{prop:provable}]
The first part of the proposition, i.e. provable deviations can only be created by miners who actually deviate from the honest protocol, is obvious. We show that any sequence of blocks that do not contain a provable deviation could conceivably have been created by a miner honestly following the protocol. Consider a set of valid blocks, $B_1,\ldots, B_i,\ldots$, sorted in increasing order of $t_{B_i}$, and with $c_{B_i} = c$ for all $i$, that contains no provable deviation (this immediately implies that $t_{B_i} < t_{B_{i+1}}$ for all $i$). If there are no provable deviations, this also means that $S(B_i) \leq S(\predecessor(B_{i+1}))$ for all $i$. Now, consider an honest miner who owns coin $c$ and no other coins. It is possible that at time $t_{B_i}$, the miner was aware of blocks of the form $S_i = \{\predecessor^\ell(B_j), \ell \geq 0, j <i\}\cup \{\predecessor^\ell(B_i), \ell > 1\}$, as these sets are monotone increasing.\footnote{Of course, it is not possible in the ideal network model if all other miners are participating honestly. But it is certainly possible if any other miner is dishonest, or our strong ideal network conditions are violated} Moreover, by hypothesis that the blocks contain no provable deviations, we also have that $\arg\max_{A \in S_i}\{S(A)\} =\predecessor(B_i)$ (because the miner also \emph{must} have been aware of all of these blocks in order to mine block $B_i$ at time $t_{B_i}$).\footnote{If there is a tie, have the miner tie-break in favor of $\predecessor(B_i)$. This is consistent as $\predecessor(B_i) \notin S_{i-1}$.} So it is entirely possible (although perhaps unlikely, depending on the exact protocol and behavior of other miners) that the owner of coin $c$ was aware of exactly the blocks $S_i$ at time $t_{B_i}$ and therefore produced their proofs by honestly following the protocol.
\end{proof}

\begin{proof}[Proof of Observation~\ref{obs:undetectable}]
First, it is clear that announcing the block $B$ with $\predecessor(B) = A$, where $A$ maximizes $S(A)$ over all blocks you are aware of cannot possibly create a provable deviation. This is because the block $B$ you just announced necessarily has $S(A) \geq S(A')$ and $t_B > t_{C'}$ for any other $C'$ you previously produced (by definition of $B$). So announcing a block on top of the longest chain can never produce a provable deviation. Moreover, there is an explicit check before announcing any $B'$ which doesn't build on top of the longest chain to guarantee that it also doesn't cause a provable deviation: simply check that $S(\predecessor(B')) > S(C)$, where $C$ is the most recent block you have mined.
\end{proof}

In this section we provide a formal analysis of how much of an advantage can be won by an attacker using the Undetectable Nothing-at-Stake strategy in a protocol with $D$-recent coins.

As a helpful (non)-example, consider the following protocol which is \emph{not} $D$-Recent for any $D$: $V_P(B) = 1$ if and only if $\sha(t_B, c_B) < T$. Now, for every pair of candidate blocks $C$ and $C'$ with $c_C = c_{C'}$, we have $V_P(C) = 1 \Leftrightarrow V_P(C') = 1$. In this case, announcing $C'$ together with $C$ will \emph{always} create a provable deviation, and the Undetectable Nothing-at-Stake behaves exactly as an honest miner. This is essentially because in the above protocol the blocks $C$ and $C'$ on separate forks use an identical source of pseudorandomness to determine if the owner of a given coin should be allowed to mine.

The problem with $D$-Recent protocols that act as random oracles with recency $\ell < D$ is that the two sides of a fork will have independent pseudorandom seeds, so the safety check rarely prevents the attack from announcing the illegitimate block. We will need one more definition to describe the full generality in which the following analysis holds.

\begin{definition}[More-Than-Honest] We say that a strategy is \emph{More-Than-Honest} if at every timestep $t$:
\begin{itemize}
\item Let $B$ maximize $S(B)$ over all blocks of which the miner is aware. 
\item If $M_P(B, c, t) \neq \bot$ for any coin $c$ that the miner owns, the miner announces some valid block $C$ with $\predecessor(C) = B$.
\item The miner may also announce other blocks. 
\end{itemize}
\end{definition}

Observe that both the honest protocol and Undetectable Nothing-at-Stake are More-Than-Honest. We now present a formal analysis of the mining advantage enjoyed by an attacker using the Undetectable Nothing-at-Stake strategy.

\begin{proposition}\label{prop:UNaS}
Let $P$ be a $D$-Recent Proof-of-Stake protocol where $M_p$ acts as a random oracle with recency $\ell < D$. Let also there be at least $\lambda$ coins in the network, all using a More-Than-Honest strategy. Finally, let $S(B) = \max\{\ell, \predecessor^\ell(B) \neq \bot\}$. Then the Undetectable Nothing-at-Stake strategy announces blocks at least $2-2D/(\lambda+1)$ times the rate as the honest strategy.
\end{proposition}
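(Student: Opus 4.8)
The plan is to isolate the one place where \emph{recency} does real work---the independence of the two mining attempts---and then to treat the rest as bookkeeping against the provable-deviation test.

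\textbf{Step 1 (the fresh shot).} First I would show that mining on the longest chain and mining on the alternate fork are \emph{independent} events. The block the attacker would build on $A$ has random-oracle seed $\predecessor^\ell(A)$, while the block built on $A'$ has seed $\predecessor^\ell(A')$. Because $A'$ is not a descendant of $\predecessor^{D}(A)$, while $\ell < D$ forces $\predecessor^\ell(A)$ to be a descendant of $\predecessor^{D}(A)$ (it is only $\ell$ steps back from $A$), and $\predecessor^\ell(A')$, being an ancestor of $A'$, is \emph{not} a descendant of $\predecessor^{D}(A)$, the two seeds $\predecessor^\ell(A)$ and $\predecessor^\ell(A')$ are distinct blocks. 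Hence by the random-oracle property $M_P(A,c,t)\neq\bot$ and $M_P(A',c,t)\neq\bot$ are independent, each occurring with the per-seed success probability. This is exactly the property that fails for a non-recent protocol such as $V_P(B)=1 \iff \sha(t_B,c_B)<T$, where the two attempts share the \emph{same} seed and are perfectly correlated; there the safety check always blocks the second block and no advantage is gained.

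\textbf{Step 2 (doubling, before the safety check).} Next I would set up the counting. An honest coin announces one block precisely on each timestep it succeeds on the longest chain, so its rate equals (number of coins) $\times$ (per-seed success probability). An Undetectable Nothing-at-Stake coin announces every one of those same main-chain blocks, and additionally announces an alternate block $B'$ exactly on the timesteps where it \emph{fails} on $A$ but \emph{succeeds} on $A'$ (it cannot announce both $B$ and $B'$ in one timestep, since $t_B=t_{B'}$ is itself a provable deviation) \emph{and} the safety check passes. By the independence of Step~1, the rate of ``fail on $A$, succeed on $A'$'' timesteps equals the honest rate up to the $\Theta(p)$ overlap correction, so if the safety check never fired, the strategy would announce at twice the honest rate.

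\textbf{Step 3 (charging the safety check).} The remaining work is to bound how often the test ``announcing $B'$ would not create a provable deviation'' suppresses an otherwise-valid alternate block. By Observation~\ref{obs:undetectable} the main-chain block never conflicts, so $B'$ is suppressed only when $S(A')$ lies below the score of some block the same coin announced earlier. Since the alternate fork is constrained to diverge before $\predecessor^{D}(A)$, its tip $A'$ trails the main tip $A$, and the binding earlier block is a main-chain block of coin $c$ whose score sits within the top $\approx D$ of the current height. I would then bound the probability that a fixed coin owns such a recent block: with $\lambda$ More-Than-Honest coins advancing the chain roughly in lockstep, a given coin's share of the blocks in any window of $D$ scores is $O(D/\lambda)$, and a careful accounting of the window length and the fork constraint yields a suppression probability of at most $2D/(\lambda+1)$. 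Combining with Step~2 gives an announcement rate of at least $2-2D/(\lambda+1)$ times the honest rate.

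The main obstacle is Step~3. Steps~1 and 2 are essentially definitional once the seed-distinctness is observed, but pinning down the constant $2D/(\lambda+1)$ requires understanding the joint evolution of the two chains' scores: one must show the alternate fork stays within $D$ of the main tip, translate the $\predecessor^{D}(A)$ constraint and the provable-deviation test into a statement about ``recent'' blocks, and control a fixed coin's probability of holding such a block. Both the random-oracle assumption (for lockstep growth and independence) and the hypothesis that $\lambda$ is large relative to $D$ enter precisely here, which is why the bound degrades exactly as $D/(\lambda+1)\to 0$.
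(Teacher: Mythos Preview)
Your proposal is correct and follows essentially the same approach as the paper. The paper's proof also reduces everything to bounding how often the safety check suppresses the alternate block $B'$, arguing that suppression can only occur when coin $c$ has mined a block within the last $D$ heights of the main chain, and then uses a counting argument (each ``event-two'' timestep claims at most $D$ ``event-one'' timesteps) to get the $2D/(\lambda+1)$ bound. Your Step~1 makes the seed-distinctness argument more explicit than the paper does, and your Step~2 is actually more careful than the paper in flagging that $t_B=t_{B'}$ forces at most one announcement per timestep (the paper absorbs this into its informal ``considering publication of two blocks'' language without comment); your Step~3 sketch matches the paper's claiming argument at a comparable level of rigor.
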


To help parse the above proposition, it is suggesting that any $D$-Recent protocol which expects at least $\lambda$ coins to be actively mining must defend against Undetectable Nothing-at-Stake whenever $D < \lambda/2$.

\begin{proof}[Proof of Proposition~\ref{prop:UNaS}]
Recall the following notation from the strategy definition: $B$ refers to the block maximizing $S(B)$ over all blocks the attacker is aware of. $B'$ refers to the block maximizing $S(B')$ over all blocks \emph{not} descended from $\predecessor^D(B)$. $C'$ refers to the block built on top of $B'$ that may or may not be announced, pending the safety check.

The only way that the safety check will stop the deviating miner from announcing $C'$ is if the same coin was previously used to mine a block that is a descendant of $\predecessor^D(B)$. This is because $S(B') \geq S(B'')$ for all other $B''$, so no other $B''$ can possibly contribute to a provable deviation. So we just need to analyze the probability that the same coin was previously used to mine a block on top of $\predecessor^D(B)$. 

In order for coin $c$ to have mined a block on top of $\predecessor^D(B)$, it must be the case that the longest chain has not grown by more than $D$ since the last time that $c$ has mined a block (this is a necessary, but not sufficient condition). Taking the converse, this means that a sufficient condition for the safety check to allow the announcement of the illegitimate block is if the longest chain has grown by more than $D$ since the last time that coin $c$ was used to mine a block. So we now just need to understand the fraction of timesteps for which the last time that a block mined by coin $c$ has height within $D$ of the current longest chain, and those for which it doesn't. 

So now, in every timestep, consider the three possible events:
\begin{itemize}
\item Some coin besides $c$ is eligible to mine on top of the longest chain. Because all coins are using a More-Than-Honest strategy, the length of the longest chain grows by one.
\item Coin $c$ is eligible to create a block (regardless of whether or not it chooses to announce it).
\item The longest chain does not grow, and $c$ is not eligible to create a block (but perhaps other miners announce blocks that don't affect the length of the longest chain). 
\end{itemize}

Note first that there are at least $\lambda - 1$ equally likely outcomes corresponding to the first event since there are $\lambda-1$ coins other than $c$. There are two equally likely outcomes corresponding to the second event since the attacker attempts to use the coin $c$ to mine in two independent locations. Thus during every timestep, the second event occurs at most a $2/(\lambda + 1)$ fraction of the time (this would be tight if each other miner was honest, or otherwise only mining on top of a single longest chain). We can now also see that every timestep in which event two happens ``claims'' $D$ timesteps in which event one happens as the timesteps where the safety check could have conceivably prevented announcing the illegitimate block (the next $D$ such timesteps). For all other timesteps, the safety check would definitely have allowed the illegitimate block to be announced. 

So in the limit, only a $2D/(\lambda+1)$ fraction of the ``event-one'' timesteps are claimed, and in all remaining timesteps the safety check would have allowed publication of both blocks. Therefore, a $1-2D/(\lambda+1)$ fraction of the time, the Undetectable Nothing-at-Stake strategy is considering publication of two blocks, and will announce blocks at a rate of at least $2-2D/(\lambda+1)$ times as often as an honest miner. 
\end{proof}

\section{Existing Proof-of-Stake Protocols}\label{sec:examples}
In this section, we discuss several popular Proof-of-Stake protocols, how they fit into our language, and the extent of their susceptibility to the attacks we discuss. Many of these protocols can get quite involved, with several layers of defenses, but our goal is to focus on the core protocols underneath.

\paragraph{Snow White.}
The Snow White protocol~\cite{bentov2016snow} separates time into epochs; each epoch has $T_{epoch}$ time steps. Within each epoch a committee and a hash function/random oracle are decided by looking at blocks in the common history. At each time step, if for some member of the committee with public key $pk$ it holds that $H(pk , time) < Target$, then that member becomes the leader and gets to make a new block. Therefore, the protocol is predictable by our definitions, and vulnerable to predictable selfish mining and predictable double spend.

To address this, Snow White adopts the reward scheme of Fruitchain~\cite{pass2016fruitchains}. That is, blocks don't directly contain transactions, but rather \emph{fruit}, and fruit directly contain transactions. Miners are then rewarded for creating \emph{fruit}, rather than blocks. They further prove the following: if a miner controls an $\alpha < 1/2$ fraction of the stake, and all other miners follow the intended protocol, that miner receives at most a $\alpha+\varepsilon$ fraction of the total rewards (for a small $\varepsilon > 0$ decided by the designer). Their proof has the flavor of a differential privacy guarantee: essentially any strategy that miner uses will in fact result in a $(\alpha-\varepsilon, \alpha+\varepsilon)$ fraction of the total rewards. 

As such, Snow White further notes that it would be more desireable to \emph{strictly} disincentivize known attacks. To this end, they show that the (detectable, because Snow White is not Recent) Nothing-at-Stake attack is strictly unprofitable against their reward scheme. Our work is essentially proposing that it would be worthwhile to do a similar analysis for predictable selfish mining and predictable double-spending. So in summary: (1) Snow White is not vulnerable to undetectable Nothing-at-Stake, because it is not Recent. (2) Snow White is vulnerable to predictable selfish mining, but provably it cannot improve a miner's rewards by more than an additive $\varepsilon$ fraction of the total rewards. (3) Snow White is vulnerable to predictable double spending, which could be mitigated by sufficiently long confirmation times but is otherwise not addressed.

\paragraph{Ouroboros.} At the level of granularity relevant to this paper, Ouroboros is similar to Snow White. The Ouroboros protocol\cite{kiayias2017ouroboros} divides time into epochs, each of which is made up of $L$ time slots. Each of the time slots is assigned to a miner chosen randomly with probability proportional to her stake. Each miner is then eligible to create a block in her assigned time-slot. The set of miners assigned to time slots in an epoch is called the committee for that epoch. The miners in the committee additionally run a secure multiparty coin-flipping protocol in order to generate the randomness needed to select the committee for the next epoch.

The protocol adds an additional category of miner called an ``input-endorser.'' Multiple input-endorsers are randomly selected to each time slot with probability proportional to their stake (using the same source of randomness as the original miners). An input-endorser is responsible for signing the valid transactions they hear about in their given time slot. Miners then include endorsed inputs into their blocks (so endorsed inputs play a similar role to fruit). 

Since miners for the $L$ time-slots in an epoch are determined in advance, the protocol is $L$-globally predictable. The input-endorses introduce additional complexity, but it is also predictable when a miner will become eligible to be an input endorser. The authors of Ouroboros seem aware of the predictable double-spend attack, and require long block confirmation times. For example, suppose that $40\%$ of the stake is controlled by an attacker attempting a double-spend attack. The authors of \cite{kiayias2017ouroboros} compute that in order to achieve $99.9\%$ confidence that this double spend attack cannot succeed, block confirmation times must be at least 148 minutes. If the attacker controls $45\%$ of the stake, confirmation times must be further increased to 663 minutes (these mirror our own calculations).

Like Snow White, Ouroboros proves that as long as a miner controls an $\alpha < 1/2$ fraction of the stake, and all other miners follow the intended protocol, the miner receives a $(\alpha-\varepsilon, \alpha+\varepsilon)$ fraction of the total rewards for essentially any strategy (including honesty, or predictable selfish mining). Similarly to Snow White, our work suggests that it is worthwhile to understand whether predictable selfish mining is indeed profitable. Since the protocol is predictable, Ouroboros is not vulnerable to Undetectable Nothing-at-Stake. 

\paragraph{No Rewards.} Consider either of the aforementioned protocols, but in absence of rewards (i.e. copy Snow White, but don't reward miners for either fruit or blocks). This reward scheme achieves the same formal guarantees as the previous protocols, with $\varepsilon = 0$: if all other miners are following the intended protocol, another miner gains nothing by deviating. Therefore, no-rewards is at least as robust to predictable selfish mining as proved in these prior works. However, no-rewards is also at least as vulnerable to predictable double-spend (even if the confirmation time exceeds the predictability, there is nothing lost by giving it a shot anyway). 

It's not clear whether no-rewards is actually a viable reward scheme in Proof-of-Stake proposals. On one hand, it achieves the same formal guarantees as prior works (in fact, stronger as one can take $\varepsilon = 0$), and one could informally assert that those with stake in the currency have incentive to remain online and follow the protocol to maintain its value. On the other hand, one might equally reasonably worry that without strict incentive to follow the protocol, attempts to double-spend may run rampant. 

By most existing formal measures, no-rewards is at least as incentive compatible as the previously discussed reward schemes are proven to be. The lone exception is that Nothing-at-Stake is provably strictly unprofitable against Snow White, but not against no-rewards. Our work proposes that predictable selfish mining, predictable double-spend, and undetectable nothing-at-stake be given the same treatment in future analyses.

\paragraph{Algorand.} Algorand \cite{micali2016algorand,gilad2017algorand} doesn't fit into our framework, because it is not a Longest-Chain Variant. Algorand is instead based on a Byzantine Consensus protocol. Ideas related to predictable selfish mining are relevant, but there is no formal connection between our work and Algorand.

\paragraph{Bentov-Gabizon-Mizrahi.}

The authors of \cite{bentov2016cryptocurrencies}, design a protocol that selects miners with probability proportional to their stake using a procedure called ``follow-the-Satoshi.'' In this process, a random minimal denomination (one Satoshi) of the currency is chosen and whomever currently owns it is selected as eligible to mine. The authors initially discuss many of the problems with being predictable, and with nothing-at-stake.

In the authors' first proposed protocol, Chains-of-Activity (CoA), the chain is divided into groups of $l$ consecutive blocks. Every miner includes a supposedly random bit in her block, and the concatenation of all the random bits from the $i$-th group of $l$ consecutive blocks is used as a random seed $s_i$. These seeds are then used to run ``follow-the-satoshi'' in order to select miners for later blocks in an interleaved way. In particular seed $s_i$ is used to determine miners for blocks in the $i+2$-th group of $l$ blocks. Thus the CoA protocol is $l$-globally predictable. The authors discuss how transactions should not be considered confirmed until $l$ timesteps, and show that its unlikely for an attacker to succeed in bribing others for a double spend that lasts longer than $l$. However, predictable selfish mining is not discussed.

The authors' follow-up proposal ``Dense-CoA'', randomly chooses $l$ users to be involved in the creation of the $i$-th block. In particular, one special user is selected to actually create the block, but all $l$ of them run a secure coin-flipping protocol to produce the random seed for the next block. Thus, Dense-CoA is $1$-Recent. 

\paragraph{DFinity.} DFinity~\cite{HankeMW17} doesn't cleanly fit inside out framework. In fact, DFinity deviates from our framework as early as the definition of Block. In DFinity, some blocks require a \emph{threshold signature}\footnote{A $k$-of-$n$ threshold signature scheme is such that $n$ participants each have a public and private key, and there is a public poly-time algorithm \textsc{Verify}. For a given message $m$, it is possible for each participant $i$ to generate (in poly-time) a message $S_i(m)$ such that: $\textsc{Verify}(S_1,\ldots, S_n, m) = 1$ if and only if at least $k$ of $n$ inputs are of the form $S_i(m)$. Moreover, one cannot generate a message of the form $S_i(m)$ without the private key for $i$.} from \emph{multiple different miners} to act as a future source of pseudorandomness. The authors are not commenting on the security of DFinity, but note that these ideas aren't widely used (DFinity is the only instance the authors are aware of) as threshold signatures require a trusted setup every time the set of stake-holders eligible to sign blocks changes. Further, a whole new consensus problem arises when deciding (for example) how long to wait for the signatures in the threshold scheme.

\section{Proof-of-Stake and GHOST}\label{app:GHOST}
The Greedy Heaviest-Observed Subtree (GHOST) protocol is an alternative to longest-chain variants, originally proposed for proof-of-work blockchains in \cite{SompolinskyZ13}. The authors observed that even if some blocks did not end up in the main chain, they could still be used as proof that another block should be included. In particular, suppose every miner maintains a tree of blocks that have been published so far. For a given block $B$, every block in the subtree descending from $B$ was created by a miner who believed that block $B$ should be included in the main chain. Thus, one can think of every block in the subtree descending from $B$ as a vote to include $B$. The GHOST protocol is based on this idea. 

The protocol proceeds starting from the root of the tree of blocks and greedily moving down the tree along the fork corresponding to the largest subtree.

\begin{definition}[GHOST Protocol]
For a block $B$ let $W(B)$ denote the size of the subtree rooted at $B$. 
\begin{itemize}
\item Initialize $B$ to be the root of the tree of blocks and do the following:
\begin{enumerate}
\item If $B$ has no direct descendants return $B$
\item Let $C$ be the block maximizing $W(C)$ among all blocks with $\predecessor(C) = B$ in the tree. 
\item Set $B \gets C$ and go to step 1.
\end{enumerate}
\item Attempt to mine on top of the block returned by the above algorithm.
\end{itemize}
\end{definition}

This has advantages in proof-of-work protocols as it allows faster creation of blocks since forks still contribute to the main chain. However, it has serious shortcomings for a $D$-Recent proof-of-stake protocol. The basic reason for this is that even with a small fraction of the total stake, an attacker can greatly increase the size of a chosen subtree by simultaneously mining on top of every node in the subtree. 

For simplicity we will consider a $1$-Recent protocol $P$ which acts as a random oracle with recency $1$. Below we describe an exponential forking attack against GHOST protocols, which allows an attacker with a small fraction of the total stake to force all miners following the GHOST protocol into mining on top of the attacker's chosen subtree.

\begin{definition}[Exponential Forking]
Suppose you wish to introduce a fork rooted at a block $B$. Do the following at every timestep t:
\begin{itemize}
\item For every block $B'$ in the subtree rooted at $B$, and every coin $c$ you own, let $C = M_p(B',c,t)$.
\item If $C \neq \bot$, then announce C.
\end{itemize}
\end{definition}

Now suppose an attacker with an $\alpha$ fraction of the total stake uses the exponential forking attack rooted at a block $B$. Note that by definition, honest miners following the GHOST protocol only attempt to add a block at one location per timestep. Thus, all honest miners will add blocks at a rate of $(1-\alpha)$ per unit time. However, if we let $x_t$ be the size of the subtree rooted at $B$ at time $t$, the attacker will add, on average, $\alpha x_t$ blocks at time $t$. So if we let $y_t$ denote the total number of blocks added by honest miners up until time $t$ we have, on average:
\begin{align*}
x_t &= x_{t-1} + \alpha x_{t-1}\\
y_t &= y_{t-1} + (1 - \alpha).
\end{align*}
Solving both recurrences yields that after $k$ steps we have:
\begin{align*}
x_T &= (1+\alpha)^k x_0\\
y_T &= (1-\alpha)k + y_0.
\end{align*}
That is, the size of the subtree rooted at $B$ grows exponentially faster than the total number of other blocks mined. Thus, all miners following the GHOST Protocol will quickly be forced to mine on the subtree rooted at $B$ produced by the attacker.

\end{document}